\newtheorem{mytheorem}{Theorem}
\newtheorem{mylemma}{Lemma}
\newtheorem{myproposition}{Proposition}
\newtheorem{mydefinition}{Definition}
\newtheorem{myconjecture}{Conjecture}
\newcommand{\ignore}[1]{}
\newcommand{\mc}[1]{\mathcal{#1}}
\newcommand{\bes} {\begin{subequations}}
\newcommand{\ees} {\end{subequations}}
\newcommand{\beq}{\begin{equation}}
\newcommand{\eeq}{\end{equation}}
\renewcommand{\norm}[1]{\| #1 \|}
\def\a{\alpha}
\def\b{\beta}
\def\t{\theta}
\def\D{\Delta}
\def\>{\rangle}
\def\<{\langle}
\def\Tr{\mathrm{Tr}}
\def\Pr{\mathrm{Pr}}
\newcommand{\ketbra}[1]{|{#1}\>\!\<#1|}
\def\mE{\mathcal{E}}
\def\mF{\mathcal{F}}
\def\mG{\mathcal{G}}
\def\mH{\mathcal{H}}
\def\mI{\mathcal{I}}
\def\mT{\mathcal{T}}
\def\mU{\mathcal{U}}
\def\mV{\mathcal{V}}
\def\mZ{\mathcal{Z}}
\newcommand{\HX}{H_X}
\newcommand{\RX}{G_X}
\newcommand{\HI}{H_{\mathrm{I}}}
\newcommand{\tHI}{\widetilde{H}_{\mathrm{I}}}
\newcommand{\HTFI}{H_{\mathrm{TFI}}}
\newcommand{\UDAD}{U_{\mathrm{tot}}(T)}
\newcommand{\PDAD}{P_{U_\mathrm{tot}}}
\newcommand{\UpDAD}{U'_{\mathrm{tot}}(T)}
\newcommand{\PpDAD}{P_{U'_\mathrm{tot}}}
\newcommand{\UIQP}{U_{\mathrm{IQP}}}
\newcommand{\PIQP}{P_{U_{\mathrm{IQP}}}}
\newcommand{\UpIQP}{U'_{\mathrm{IQP}}}
\newcommand{\tU}{\widetilde{U}}
\newcommand{\UL}{U_{\mathrm{L}}}
\newcommand{\UR}{U_{\mathrm{R}}}
\DeclareMathOperator{\sech}{sech}
\begin{document}

\title{Digital-Analog-Digital Quantum Supremacy}
\author{Daniel A. Lidar}
\affiliation{Departments of Electrical \& Computer  Engineering, Chemistry, and Physics \& Astronomy, University of Southern California, Los Angeles, 
CA 90089}
\affiliation{Center for Quantum Information Science \& Technology, University of
Southern California, Los Angeles, CA 90089}

\begin{abstract}
Quantum supremacy has been explored extensively in gate-model settings. Here, we introduce a quantum-supremacy framework for a hybrid digital-analog-digital quantum computing (DADQC) model. We consider a device that applies an initial layer of single-qubit gates, a single transverse-field Ising analog block, and a final single-qubit layer before $Z$-basis readout. The analog block approximates $Z$-diagonal Ising evolution, and we prove that the resulting output distribution is within constant total-variation (TV) distance of an Instantaneous Quantum Polynomial-time (IQP) circuit. Our bounds and constructions are established for fully connected as well as bounded-degree hardware graphs, matching a variety of architectures, including trapped-ion, neutral atom, and superconducting platforms. Assuming anticoncentration (which we prove for all-to-all hardware graphs and conjecture for bounded-degree hardware graphs) and an average-case hardness conjecture for the associated complex-temperature Ising partition functions, standard reductions imply that any efficient classical sampler achieving constant TV error collapses the polynomial hierarchy. Our results imply that quantum-supremacy tests are possible on today's quantum annealers, as well as other devices capable of hybrid digital-analog quantum evolution.
\end{abstract}

\maketitle

The pursuit of quantum computational supremacy, i.e., the demonstration of a task efficiently solvable by a quantum device but intractable for classical  computers~\cite{Preskill:2012aa}, has become a central benchmark for near-term quantum hardware. Following the theoretical proposals of boson sampling~\cite{Aaronson2011} and Instantaneous Quantum Polynomial-time (IQP) circuits~\cite{Bremner2010,Bremner:2016aa}, experimental efforts have realized quantum advantage on several distinct platforms. These include random circuit sampling (RCS) on superconducting qubits~\cite{Arute2019, zhuQuantumComputationalAdvantage2022,morvan2023phase,Gao:2025aa} and trapped ions~\cite{DeCross:2025aa}, as well as Gaussian boson sampling in photonic networks~\cite{Zhong:2020,Madsen2022,Deng:2023}. Each of these schemes leverages the hardness of approximately sampling from the output distribution of a nonuniversal but physically natural quantum process, under plausible complexity-theoretic assumptions.
A complementary route is the Quantum Approximate Optimization Algorithm (QAOA), a shallow alternating circuit for combinatorial optimization~\cite{farhi2014quantum}, whose lowest-depth instances have classically intractable output distributions under standard assumptions~\cite{FarhiHarrow-QAOA,KroviQAOAAvgCase}.

These results rely on discrete gate sequences and specific random ensembles (e.g., Haar-random unitaries or random interferometers). An alternative route to supremacy is through analog Hamiltonian dynamics, where complexity arises from continuous-time evolution under restricted Hamiltonians. This paradigm includes stoquastic adiabatic quantum computation, which uses Hamiltonians whose off-diagonal matrix elements are all real and non-positive in the computational basis~\cite{Bravyi:2009sp} and has been extensively reviewed~\cite{Albash-Lidar:RMP,Hauke:2019aa}; provable quantum speedups are known in some cases~\cite{Somma:2012kx,hastings2020power,Gilyen:2020aa}. The complexity-theoretic status of stoquastic dynamics and their potential for demonstrating supremacy remain unresolved. Recent progress in quantum annealing~\cite{King:2019aa,King:22,king2022coherent,Ebadi:22,bauza2024scaling}, including evidence of beyond-classical quantum simulation~\cite{king2024computationalsupremacyquantumsimulation}, motivates revisiting this question.

Here, we consider a minimal hybrid model that bridges the digital and analog paradigms for superconducting flux qubits~\cite{DeshpandeAQC2025}, and applies directly to other modalities that implement transverse-field Ising dynamics, such as Rydberg atoms~\cite{Glaetzle:2017aa,Bernien:2017aa,Scholl:2021aa,Ebadi:22}, trapped ions~\cite{Kim:2010aa,Islam:2013mi,Monroe:2021aa}, and superconducting-cavity platforms~\cite{Lamata:2017aa}. The device applies an initial layer of single-qubit gates, then a single analog block of transverse-field Ising evolution, and finally another single-qubit layer. Although the analog block is stoquastic in the computational basis, the terminal single-qubit layer applies random 
$XY$-plane product rotations, so the overall sampling task is not constrained by stoquasticity. We call this model ``digital-analog-digital quantum computing'' (DADQC). We show that its output distribution is within constant total-variation distance of an IQP circuit, thereby inheriting standard IQP sampling-hardness assumptions~\cite{Bremner:2016aa,Boixo2018,Hangleiter2018}, one of which (anticoncentration) we prove for all-to-all connected hardware graphs. We do not require adiabaticity; DADQC is intermediate between adiabatic quantum annealing as originally conceived~\cite{kadowaki_quantum_1998} and the circuit model, in the spirit of diabatic quantum annealing~\cite{crosson2020prospects}. 

Our result situates DADQC alongside RCS, boson sampling, and shallow alternating circuits as a viable route to supremacy~\cite{RevModSupremacy2023}. It shows that a model that is fully compatible with current superconducting quantum annealers~\cite{DeshpandeAQC2025} and other modalities capable of hybrid digital-analog QC~\cite{Glaetzle:2017aa,Bernien:2017aa,Scholl:2021aa,Ebadi:22,Kim:2010aa,Islam:2013mi,Monroe:2021aa,Lamata:2017aa} can exhibit sampling hardness equivalent to that of IQP circuits. This extends quantum supremacy beyond the present circuit-model setting, providing a complementary theoretical foundation compatible with analog hardware.

\textit{Model}.---%
Consider $n$ qubits occupying the vertex set $\mV_{\mH}$ of a fixed ``hardware graph'' $\mH_n=(\mV_{\mH},\mE_{\mH})$, whose edges (couplers) specify native two-qubit interactions. 
Let $\UL=\bigotimes_{i\in\mV_{\mH}}U_{\mathrm{L},i}$ and $U_{\mathrm{R}}=\bigotimes_{i\in\mV_{\mH}}U_{\mathrm{R},i}$ be arbitrary single-qubit layers. We assume initialization in $\ket{0}^{\otimes n}$ and a $Z$-basis measurement after $U_{\mathrm{R}}$. Denoting by $U_{\mathrm{A}}(T)$ the analog unitary, the device implements the total evolution
\beq
\label{eq:full}
\UDAD = U_{\mathrm{R}} U_{\mathrm{A}}(T) \UL,\quad U_{\mathrm{A}}(T) = \mT e^{-i \int_0^T \HTFI(t) dt} ,
\eeq
where $\mT$ denotes time ordering. The analog evolution is
generated by the transverse field Ising model Hamiltonian
\beq
\label{eq:HTFI}
\HTFI(t) \equiv A(t)\HX + B(t)\HI ,
\eeq
with transverse field $\HX\equiv\sum_{i\in\mV_{\mH}}X_i$ and Ising term
\beq
\label{eq:HX-HI}
\HI \equiv  \sum_{i\in \mV_{\mH}} h_i Z_i + \sum_{(i,j)\in \mE_{\mH}} J_{ij} Z_i Z_j ,
\eeq
where $X$ and $Z$ denote the Pauli matrices.
The local fields $h_i$ and coupling constants $J_{ij}$ are dimensionless; the annealing schedules $A(t),B(t)$ have units of energy (we set $\hbar\equiv 1$) and are smooth, with $A(t)\ge 0$ monotonically decreasing toward $0$ and $B(t)\ge 0$ monotonically increasing on $[0,T]$. 

For comparison, traditional IQP circuits~\cite{Bremner2010,Bremner:2016aa} have the form
\begin{subequations}
\label{eq:IQP-general}
\begin{align}
\label{eq:IQP-U}
\UIQP&= W^{\otimes n} e^{-iH_Z} W^{\otimes n},\\
\label{eq:IQP-HZ}
H_Z&=\sum_{i\in\mV_{\mG}} v_i Z_i  +  \sum_{(i,j)\in\mE_{\mG}} w_{ij} Z_i Z_j,
\end{align}
\end{subequations}
on an ``interaction graph'' $\mG=(\mV_{\mG},\mE_{\mG})$, where $v_i,w_{ij}\in\{\frac{\pi}{8}k:k=0,\dots,7\}$ (the ``$\pi/8$ grid'') and $W$ is the Hadamard transform. Let $s\in\{0,1\}^n$ denote a bitstring; the outcome distribution of the IQP model after initialization in $\ket{0^{\otimes n}}$ is
\begin{equation}
\label{eq:IQP-P}
\PIQP(s)  =  \bigl|\bra{s}\UIQP\ket{0^{\otimes n}}\bigr|^2.
\end{equation}
Assuming anticoncentration along with average-case hardness (which exploits that $\PIQP(s)= 4^{-n} |\mZ(s)|^2$, where $\mZ(s)$ is a suitable complex-parameter Ising partition function, believed $\#$P-hard to approximate), it is classically intractable to approximately sample from $\PIQP$ within constant total-variation (TV) accuracy unless the polynomial hierarchy ($\mathrm{PH}$) collapses~\cite{Bremner:2016aa,Boixo2018,Hangleiter2018}. Recall that the TV distance between two
probability distributions $P$ and $Q$ is $D(P,Q)
\equiv \frac{1}{2}\sum_{s} |P(s)-Q(s)|$.

Two key differences distinguish DADQC from IQP:
(i) IQP has no transverse field. To overcome this, we will exactly factor out the $\HX$ dynamics by appropriate one-qubit rotations implemented via $\UL$, and concentrate almost all of the integrated $B(t)$ weight into a short final window where $A(t)$ is already small; the resulting interaction-picture Hamiltonian is then nearly diagonal in the computational ($Z$) basis.
(ii) The IQP interaction graph $\mG$ need not match the hardware graph $\mH$, which often has bounded degree. We therefore introduce and analyze an implementable random-graph ensemble supported on $d$-regular subgraphs of $\mH$ (see \cref{def:QPU-R-E} below), aiming to match hardware graph constraints.

With these ingredients, we choose $\{h_i,J_{ij}\}$ so that the remaining diagonal gate reproduces the appropriate IQP angles (first the $\pi/8$ grid, later a graph state) and show that the DADQC output distribution is within a constant TV distance of $\PIQP$. 
This, together with anticoncentration (which we prove for complete graphs) and average-case hardness, will suffice to demonstrate sampling hardness for DADQC.

\textit{Emulating IQP via DADQC}.---%
Let us first assume that $\mH=\mG$, so both the DADQC model of \cref{eq:full,eq:HTFI,eq:HX-HI} and the IQP circuit of \cref{eq:IQP-general} live on the same graph. We later consider the case where the interaction graph $\mG$ is a subgraph of the hardware $\mH$.

Let $\alpha(t)\equiv \int_0^t A(t') dt'$,
$\RX(\theta)\equiv e^{-i\theta \HX}$,
and $S\equiv \RX(\alpha(T)) = \bigotimes_{j\in\mV_{\mH}}e^{-i \a(T)X_j}$.
Transform the Ising Hamiltonian to the interaction picture with respect to the transverse field $A(t)\HX$:
$\tHI(t) \equiv  \RX(-\alpha(t)) \HI \RX(\alpha(t))$,
so that $\tHI(T)=S^\dagger \HI S$. 
Then
\beq
\label{eq:interaction}
U_A(T) = S \tU(T) ,\quad \tU(t)\equiv \mT e^{-i\int_0^t B(t') \tHI(t') dt'}.
\eeq
Fix $\delta>0$ and define the early/late time intervals $\mI_1=[0,T-\delta]$ and $\mI_2=[T-\delta,T]$, together with
\beq
\label{eq:early-late}
\eta \equiv \int_{\mI_1} B(t) dt \ , \quad \beta \equiv  \int_{\mI_2} B(t) dt \ , \quad \Delta\a\equiv  \int_{\mI_2} A(t) dt .
\eeq
We will choose schedules so that $\eta$ and $\Delta\alpha$ are small while $\beta$ is not.

Let $\norm{\cdot}$ denote the operator norm (largest singular value), and define $K \equiv \|[\HX,\HI]\|$.
We can show using a Duhamel-type bound~\cite{ReedSimonII} (see End Matter) that 
\begin{align}
\label{eq:Vapprox}
\left\| \tU(T)
- e^{-i\beta \tHI(T)} \right\|&\le
  \eta \norm{\HI} + K \beta \Delta\a .
\end{align}
In analogy to $\PIQP(s)$ [\cref{eq:IQP-P}], we denote by 
\beq
\label{eq:PUDAD}
\PDAD(s)
= \bigl| \braket{s|\UDAD|0^{\otimes n}} \bigr|^2 
\eeq 
the outcome distribution of the DADQC model.

We now choose the initial digital layer $\UL$ to ensure that the transverse field evolution is undone, and compare the resulting  DADQC evolution to an IQP-type target via:

\begin{mylemma}
\label{lem:main}
Let $\UL=S^\dagger W^{\otimes n}$, $H_Z=\beta \HI$, and let $U'_{\mathrm{R}}$ be an arbitrary unitary (not necessarily a product). Define
\beq
\hspace{-.15cm} \UpDAD \equiv U'_{\mathrm{R}} U_{\mathrm{A}}(T) \UL, 
\quad
\UpIQP \equiv U'_{\mathrm{R}} e^{-i H_Z} W^{\otimes n} .
\eeq

Then the induced output distributions satisfy
\beq
\label{eq:TV}
D\big(\PpDAD, P_{\UpIQP}\big)
\le \eta\|\HI\|+K\beta\D\a.
\eeq
\end{mylemma}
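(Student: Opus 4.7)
The plan is to reduce the TV-distance bound to the operator-norm estimate already furnished by \cref{eq:Vapprox}, by peeling off unitary factors on both sides and invoking the standard chain TV $\le$ trace distance $\le$ vector $\ell_2$ distance $\le$ operator norm.

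First, I would substitute $\UL=S^\dagger W^{\otimes n}$ and the interaction-picture factorization $U_A(T)=S\tU(T)$ from \cref{eq:interaction} into the definition of $\UpDAD$, obtaining $\UpDAD=U'_{\mathrm{R}}\, S\,\tU(T)\,S^\dagger\, W^{\otimes n}$. The key algebraic observation is that conjugating the target evolution $e^{-i\beta \tHI(T)}$ by $S$ undoes the $S^\dagger\,\cdot\,S$ that defines $\tHI(T)$: since $\tHI(T)=S^\dagger \HI S$, we have $S\, e^{-i\beta\tHI(T)}\,S^\dagger = e^{-i\beta \HI}=e^{-i H_Z}$. Consequently
\beq
U'_{\mathrm{R}}\, S\, e^{-i\beta\tHI(T)}\, S^\dagger\, W^{\otimes n} \;=\; U'_{\mathrm{R}}\, e^{-i H_Z}\, W^{\otimes n} \;=\; \UpIQP ,
\eeq
so $\UpDAD$ and $\UpIQP$ differ only through the replacement of $\tU(T)$ by $e^{-i\beta \tHI(T)}$ sandwiched between the same unitary factors.

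Second, I would bound the operator-norm difference. Since $U'_{\mathrm{R}}$, $S$, $S^\dagger$, and $W^{\otimes n}$ are all unitaries and the operator norm is unitarily invariant,
\beq
\bigl\| \UpDAD - \UpIQP \bigr\| \;=\; \bigl\| \tU(T) - e^{-i\beta\tHI(T)} \bigr\| \;\le\; \eta\|\HI\|+K\beta\D\a ,
\eeq
where the final inequality is exactly \cref{eq:Vapprox}.

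Third, I would promote this operator-norm bound to a TV-distance bound on the $Z$-basis measurement statistics. Writing $\ket{\psi}\equiv\UpDAD\ket{0^{\otimes n}}$ and $\ket{\phi}\equiv\UpIQP\ket{0^{\otimes n}}$, the standard pure-state chain gives $D(\PpDAD,P_{\UpIQP})\le \frac{1}{2}\bigl\| \ketbra{\psi}-\ketbra{\phi} \bigr\|_1 = \sqrt{1-|\bk{\psi}{\phi}|^2}\le \|\ket{\psi}-\ket{\phi}\|_2$ (with any convenient global phase), and $\|\ket{\psi}-\ket{\phi}\|_2 \le \|\UpDAD-\UpIQP\|$. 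Chaining this with the previous display yields the claimed inequality. The only non-routine ingredient is the algebraic cancellation $S\,e^{-i\beta\tHI(T)}\,S^\dagger = e^{-i H_Z}$, which is what motivates the specific choice $\UL=S^\dagger W^{\otimes n}$; everything else is a unitary-invariance argument plus a textbook inequality, so I do not expect a significant obstacle.
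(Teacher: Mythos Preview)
Your proposal is correct and follows essentially the same route as the paper: substitute $U_A(T)=S\tU(T)$ and $\UL=S^\dagger W^{\otimes n}$, use $\tHI(T)=S^\dagger\HI S$ to identify $S e^{-i\beta\tHI(T)}S^\dagger=e^{-iH_Z}$, invoke unitary invariance of the operator norm to reduce to \cref{eq:Vapprox}, and then pass from operator norm to TV via the pure-state trace-distance bound and data processing. The only cosmetic difference is that the paper phrases the last step as ``$D(P_{U_1},P_{U_2})\le D_{\mathrm{tr}}(\psi_1,\psi_2)\le\|U_1-U_2\|$'' citing Fuchs--van de Graaf and data processing, while you spell out the intermediate $\sqrt{1-|\langle\psi|\phi\rangle|^2}$ step explicitly.
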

The proof uses unitary invariance, \cref{eq:Vapprox}, and the data processing inequality (see End Matter).

To ensure that $\beta \HI$ matches $H_Z$ as required in \cref{lem:main}, we choose the schedule $B(t)$ [which sets $\beta$ via \cref{eq:early-late}] and the static coefficients $\{h_i,J_{ij}\}$ so that
\begin{equation}
\label{eq:angles}
\beta J_{ij}, \beta h_i \in \bigl\{\frac{k\pi}{8}:k=0,\dots,7\bigr\}.
\end{equation}
Exact target angles as in \cref{eq:angles} may not be achieved experimentally, but the hardness results for IQP circuits are robust to angle errors~\cite{Bremner:2016aa}. 
Note that for arbitrary hardware graphs with $n=|\mV|$ qubits and $m=|\mE|$ edges, and with $|h_i|\le h_{\max}$ and $|J_{ij}|\le J_{\max}$, we have
$\|\HI\|\le \sum_{i\in \mV} |h_i| + \sum_{(i,j)\in \mE} |J_{ij}| \le n h_{\max}+ m J_{\max}$, and $
K=\|[\HX,\HI]\|\le 2\sum_{i\in \mV}|h_i|+4\sum_{(i,j)\in \mE}|J_{ij}| \le 2 n h_{\max}+ 4 m J_{\max}$.
If the hardware graph $\mH$ has bounded maximum degree $d=O(1)$, then [with \cref{eq:angles}]
both $\|\HI\|,K=\Theta(n)$. 
Turning the resulting approximation of an IQP circuit by DADQC into a sampling hardness statement leads us to our first main result, which essentially states that unless  $\mathrm{PH}$ collapses, no classical polynomial-time algorithm can (weakly) simulate the corresponding DADQC device within a small total variation distance.

\begin{mytheorem}[Supremacy for DADQC]
\label{thm:sup}
Assume the standard IQP conjectures (anticoncentration together with average-case \#P-hardness of approximating output probabilities). Then there exists a constant $\varepsilon_\star>0$ such that, for any $0<\varepsilon<\varepsilon_\star$, one can choose smooth monotone schedules $A(t)$ (decreasing to $0$) and $B(t)$ (increasing from $0$), and single-qubit layers $\UL,\UR$, with the following property:
even in the presence of small parameter errors $|\Delta h'_i|,|\Delta J'_{ij}| = O\bigl(1/(n+m)\bigr)$ in the scaled angles $h'_i = \beta h_i$ and $J'_{ij} = \beta J_{ij}$, any classical probabilistic polynomial-time algorithm that samples from a distribution within total variation distance $<\varepsilon/2$ of the DADQC output distribution (i.e., of $\UpDAD$) would imply that $\mathrm{PH}$ collapses (to its third level).
\end{mytheorem}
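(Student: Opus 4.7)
\emph{Plan.} The idea is to use \cref{lem:main} to embed a random IQP instance with $\pi/8$-grid angles as the target $\UpIQP$, and then invoke the standard IQP sampling-hardness theorem: there is an absolute constant $\varepsilon_\star>0$ such that no classical polynomial-time sampler can produce a distribution within TV distance $\varepsilon_\star$ of $\PIQP$ unless $\mathrm{PH}$ collapses to its third level. Fix any $0<\varepsilon<\varepsilon_\star$. I will construct DADQC parameters for which the noisy $P_{\UpDAD}$ lies within TV $\varepsilon_\star/2$ of $\PIQP$; then any classical sampler within TV $\varepsilon/2$ of $P_{\UpDAD}$ is within TV $\varepsilon<\varepsilon_\star$ of $\PIQP$ by the triangle inequality, forcing the collapse.

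\emph{Construction.} Fix a constant $\beta_0=\Theta(1)$ and draw a random IQP instance on $\mH$ with angles $v_i,w_{ij}\in\{k\pi/8:k=0,\ldots,7\}$; set $h_i=v_i/\beta_0$ and $J_{ij}=w_{ij}/\beta_0$ so that \cref{eq:angles} is satisfied whenever $\beta=\beta_0$. Take $\UL=S^\dagger W^{\otimes n}$ and $\UR=W^{\otimes n}$, and apply \cref{lem:main} with $U'_\mathrm{R}=\UR$ so that $\UpIQP=\UIQP$ and $P_{\UpIQP}=\PIQP$. Design smooth monotone schedules so that $A(t)$ decays to essentially zero well before $T-\delta$, while $B(t)$ is exponentially small throughout $\mI_1$ and ramps up on $\mI_2$ so that $\int_{\mI_2}B(t)dt=\beta_0$; explicit complementary mollified step functions suffice. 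This achieves $\eta=O(1/(n+m))$ and $\D\a=O(1/(n+m))$, and since $\|\HI\|,K=O(n+m)$ on bounded-degree hardware (and likewise, after a standard rescaling of $J_{ij}$, for fully connected hardware), the bound of \cref{lem:main} gives $D(P_{\UpDAD},\PIQP)\le \eta\|\HI\|+K\beta_0\D\a\le \varepsilon_\star/4$ once the schedule constants are chosen sufficiently small.

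\emph{Robustness and main obstacle.} Parameter errors $|\Delta h'_i|,|\Delta J'_{ij}|\le c/(n+m)$ perturb $H_Z=\beta_0\HI$ by $\|\Delta H_Z\|\le c$ (triangle inequality over $O(n+m)$ terms). Using $\|e^{-iH_Z}-e^{-i(H_Z+\Delta H_Z)}\|\le\|\Delta H_Z\|$ together with the standard trace-distance-to-TV bound, the noisy distribution is within an additional TV $c$ of the ideal one; choosing $c\le \varepsilon_\star/4$ yields a total TV distance of $\varepsilon_\star/2$ between the noisy DADQC output and $\PIQP$, and the triangle chain above then completes the $\mathrm{PH}$ collapse via the Stockmeyer/Toda route of~\cite{Bremner:2016aa}. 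The main obstacle is the joint schedule engineering: I need $\eta$ and $\D\a$ both of order $1/(n+m)$ to absorb the $\Theta(n+m)$ growth of $\|\HI\|$ and $K$ in \cref{lem:main}, while simultaneously keeping $\beta=\Theta(1)$ and both $A,B$ smooth and monotone. This is achievable with mollified ramps that effectively decouple the $A$- and $B$-windows, but it forces the separation parameter $\delta$ (and, correspondingly, the total runtime $T$) to scale polynomially with $n+m$; the remaining pieces are routine chaining of TV distances and an appeal to the robust IQP hardness machinery.
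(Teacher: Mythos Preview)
Your approach is essentially the paper's: choose $\UL=S^\dagger W^{\otimes n}$, $\UR=W^{\otimes n}$, hit the $\pi/8$ grid via \cref{eq:angles}, use \cref{lem:main} for DADQC--IQP closeness, add the parameter-error contribution via $\|e^{-iH_Z}-e^{-i(H_Z+\Delta H_Z)}\|\le\|\Delta H_Z\|$, and finish by the triangle inequality plus the standard IQP Stockmeyer reduction. Two small corrections: (i) the paper targets $\varepsilon/4$ (not $\varepsilon_\star/4$) for each of the two contributions, so the triangle inequality gives exactly $D(P_{\mathrm{cl}},\PIQP)<\varepsilon$; with your $\varepsilon_\star/4$ targets the bound is $\varepsilon/2+\varepsilon_\star/2<\varepsilon_\star$, which still suffices but is not the ``$\varepsilon$'' you wrote; (ii) your ``main obstacle'' is misdiagnosed---the paper's explicit sigmoid example (\cref{eq:sigmoids}) shows that $T$ and $\delta$ can stay $O(1)$ and only the transition width $\mu$ must shrink as $\Theta(1/(n+m))$, so no polynomial blow-up of the runtime is required.
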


\textit{Proof sketch}. Using \cref{lem:main} we choose smooth schedules $A(t),B(t)$ and one-qubit layers so that the DADQC output is within TV distance $\le \varepsilon/4$ of an ideal IQP circuit. Small static parameter errors in the scaled angles contribute at most another $\varepsilon/4$. Thus, the DADQC output remains within TV distance $\le \varepsilon/2$ of an IQP circuit output. Any polynomial-time classical sampler within TV distance $<\varepsilon/2$ of the DADQC output would then approximate the IQP output within TV distance $<\varepsilon$, contradicting IQP sampling hardness unless $\mathrm{PH}$ collapses. See End Matter for
the full proof.

\textit{Example}.---%
We give an example of smooth schedules that meet the requirements of \cref{thm:sup}.  
Consider
\bes
\label{eq:sigmoids}
\begin{align}
\label{eq:AandB}
A(t)&=A_0 \bigl[1-s(t)\bigr],\quad B(t)=B_0  s(t),\\
\label{eq:sigmoids2}
\tilde{s}(t)&=\frac{1}{2} (1+\tanh\frac{t-(T-\delta)}{\mu}) ,
\end{align}
\ees
with $\delta,\mu>0$. To enforce $s(0)=0$ and $s(T)=1$, define the normalized schedule $s(t)\equiv \frac{\tilde{s}(t)-s_0}{\kappa}$, where $s_0\equiv \tilde{s}(0)$ and $\kappa\equiv \tilde{s}(T)-s_0$. We find, after changing variables and direct integration (see End Matter for details)
\beq
\label{eq:eta-Delalpha-ineq}
\eta \le  \frac{\ln 2}{2}\frac{B_0\mu}{\kappa} , \quad \D\a \le \frac{\ln 2}{2}\frac{A_0\mu}{\kappa} .
\eeq
Note that these upper bounds are independent of $\delta$. We also find $\kappa\ge 1-2e^{-2\min\{T-\delta,\delta\}/\mu}$, so $\kappa$ is exponentially close to $1$ in the small-$\mu$ limit.

Inserting this into the TV bound of \cref{lem:main} yields
\beq
D\bigl(\PpDAD,P_{\UpIQP}\bigr)
 \le  \frac{\ln 2}{2} \frac{\mu}{\kappa}\Bigl(B_0\|\HI\| + K \beta A_0\Bigr).
\eeq
For bounded-degree hardware, where we already showed that $\|\HI\|,K=\Theta(n)$, it suffices to choose
\beq
\mu  =  \frac{\varepsilon\kappa}{2\ln 2\bigl(B_0\|\HI\|+K \beta A_0\bigr)}
 =  \Theta(1/n),
\eeq
which ensures
$\eta\|\HI\|  \le  \frac{\varepsilon}{4}$ and $K \beta \Delta\alpha  \le  \frac{\varepsilon}{4}$,
and thus $D\bigl(\PpDAD,P_{\UpIQP}\bigr)\le \varepsilon/2$ without imposing any constraint on $\delta$.  
The scaling $\mu=\Theta(1/n)$ means the transition window narrows with system size; this clearly cannot be sustained for arbitrarily large $n$, but a supremacy demonstration just needs to reach $n$ large enough to be beyond classical feasibility. Actual values are system-specific and are left for future work.

\textit{Supremacy under hardware graph constraints}.---%
Standard IQP hardness results often assume an all-to-all interaction graph~\cite{Bremner:2016aa}. In contrast, a variety of systems exhibit a fixed bounded-degree hardware graph $\mH = (\mV_{\mH},\mE_{\mH})\ne \mG$ (e.g., the Zephyr graph of quantum annealers~\cite{Boothby2020Zephyr}), so are not fully connected. Minor-embedding~\cite{Choi1,Choi2} or the LHZ scheme~\cite{Lechner:2015} can impose effective all-to-all connectivity, but this changes the graph ensemble in other ways~\cite{Albash:2016jk} for which anticoncentration is not guaranteed.

A natural alternative is to work with sparse random ensembles for which anticoncentration is available. In particular, for uniform random $d$-regular graphs (where every vertex has degree $d$ and each such graph is chosen with equal probability) combined with random $XY$-plane product measurements, recent work proves anticoncentration for constant $d\ge 3$~\cite{Ghosh2025PRXQ}. However, a uniformly random $d$-regular graph on $n$ labeled vertices will, with high probability, include edges outside $\mE_{\mH}$, so this ensemble cannot be realized by simply selecting native couplers on a fixed hardware graph.

Therefore, to make a hardware-faithful supremacy statement, we instead introduce a ``QPU-restricted'' ensemble which is natively implementable and matches our analysis. 

\begin{mydefinition}[QPU-restricted graph ensemble]
\label{def:QPU-R-E}
Let $\mH_n=(\mV_{\mH},\mE_{\mH})$ be a fixed simple $D$-regular hardware graph on $n$ labeled vertices, and let $3\le d\le D$ be a fixed constant ($n$-independent). 
The \emph{$d$-factors} of a graph $\mH$ are
\beq
\label{eq:dfactor}
\mF_d(\mH) \equiv \{\mG=(\mV,\mE_{\mG}) : \mE_{\mG}\subseteq \mE_{\mH},\deg_{\mG}(v)=d\ \forall v\in\mV\}.
\eeq
The \emph{QPU-restricted graph ensemble} is the probability space $\mathrm{Unif}[\mF_d(\mH_n)]$ in which, for each run of $\UDAD$, a graph $\mG\in\mF_d(\mH_n)$ is drawn uniformly at random.
\end{mydefinition}
We implicitly assume $\mF_d(\mH_n)\ne\emptyset$, which holds under mild parity conditions on $n$ and $d$ from standard results on 
$d$-factors of regular graphs~\cite{Tutte:1952aa,LovaszPlummer}.

To match the IQP measurement model used in anticoncentration proofs, we randomize the final measurement basis.
We choose i.i.d. angles $\{\theta_i\}_{i=1}^n$ uniformly from $[0,2\pi)$, 
and let 
\beq
\label{eq:RZ-UR}
R_Z(\theta)\equiv e^{-i\theta Z/2},\quad U_{\mathrm{R}}^{(\theta)}\equiv \bigotimes_{i=1}^n \Bigl(W R_Z(\theta_i)\Bigr).
\eeq
Instead of \cref{eq:IQP-general}, consider the IQP-type circuit
\bes
\label{eq:IQP-new}
\begin{align}
\UIQP^{(\theta)}
&=  U_{\mathrm{R}}^{(\theta)} e^{-i H'_Z}  W^{\otimes n} = W^{\otimes n} e^{-i H''_Z } W^{\otimes n}\\
\label{eq:HZ'}
H'_Z&=\sum_{i\in\mV_{\mG}} v_i Z_i + \frac{\pi}{4}\sum_{(i,j)\in \mE_{\mG}} Z_i Z_j,
\end{align}
\ees
with arbitrary but fixed single-qubit angles $\{v_i\}$, and where $H''_Z = H'_Z + \frac{1}{2}\sum_i \theta_i Z_i$. 
Thus, inserting $\UR^{(\theta)}$ 
is equivalent to shifting the one-qubit angles $v_i \mapsto v_i + \theta_i/2$.
This also amounts to randomizing the final measurement basis in the $XY$ plane, since measuring $Z_i$ after $\UR^{(\theta)}$ is equivalent to measuring the observable
$\UR^{(\theta)\dagger} Z_i \UR^{(\theta)} = R_Z(-\theta_i) X_i R_Z(\theta_i) = \cos\theta_i X_i - \sin\theta_i Y_i$,
i.e., an $XY$-plane measurement at angle $\theta_i$ from the $X$ axis. 
Setting the two-qubit phases in $H'_Z$ to $\pi/4$ yields the graph-state entangler used in the anticoncentration results of Ref.~\cite{Ghosh2025PRXQ}.

Next, we introduce a hardness conjecture tailored to simple $D$-regular graphs, as representatives of bounded-degree hardware graphs.

\begin{myconjecture}[QPU-restricted average-case Ising hardness]
\label{conj:host-avgcase}
Let $\{\mH_n\}$ be a family of hardware graphs as in \cref{def:QPU-R-E}.  
For each $n$, sample $\mG \sim \mathrm{Unif}[\mF_d(\mH_n)]$ and 
$\theta \sim \mathrm{Unif}([0,2\pi)^n)$ independently.  
Let $H''_Z(\mG,\theta)$ denote the $Z$-diagonal Hamiltonian in \cref{eq:IQP-new} (so the
single-qubit angles are shifted by $\theta_i/2$), and let $\mZ(\mG,\theta)$ be its
complex-temperature Ising partition function.  
Then there exists a constant $\delta_1>0$ such that for every algorithm 
$A \in \mathrm{FBPP}^{\mathrm{NP}}$ (bounded-error randomized polynomial time with access to an NP oracle), if $A$ approximates $\mZ(\mG,\theta)$ within multiplicative error 
$1 \pm 1/\mathrm{poly}(n)$ on a fraction $>\delta_1$ of instances $(\mG,\theta)$ drawn as above, 
then $\mathrm{PH}$ collapses.
\end{myconjecture}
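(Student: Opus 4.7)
The plan is a worst-to-average-case reduction adapted to the QPU-restricted graph ensemble, in the style of Bremner-Montanaro-Shepherd with the polynomial-interpolation refinements of Bouland-Fefferman-Nirkhe-Vazirani and Movassagh. I would structure the argument in three layers: (i) worst-case \#P-hardness of $\mZ(\mG,\theta)$ for at least one $d$-factor $\mG\in\mF_d(\mH_n)$ and some $\theta$; (ii) a $\theta$-only worst-to-average reduction at fixed $\mG$; and (iii) a graph-randomness step lifting (ii) to the joint distribution over $(\mG,\theta)$.

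First, I would secure the worst-case anchor. Since $H''_Z$ is $Z$-diagonal with $\pi/4$ two-body phases along $\mG$ and tunable single-qubit phases $v_i+\theta_i/2$, $\mZ(\mG,\theta)$ is a specialization of the complex-parameter Ising partition function, which is \#P-hard to evaluate exactly even on bounded-degree graphs via standard gadget reductions (e.g., Goldberg-Guo). What I must ensure here is that the hard instance can be realized as a $d$-factor of the specific host $\mH_n$; this should follow by combining the gadget construction with Tutte-type factor theorems guaranteeing that $\mF_d(\mH_n)$ is rich enough to encode the gadgets.

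Second, the $\theta$-only reduction. With $\mG$ fixed, set $z_j=e^{-i\theta_j/2}$; then $\mZ(\mG,\theta)$ is multilinear in $(z_1,\dots,z_n)$ of total degree at most $n$. Restricting to a random affine line through a worst-case target $\theta^\star$ yields a univariate trigonometric polynomial of degree $O(n)$, and sampling at $\poly(n)$ parameter values followed by Berlekamp-Welch error correction (or Sudan list-decoding below the $1/2$-agreement regime) recovers $\mZ(\mG,\theta^\star)$ within multiplicative accuracy $1\pm 1/\poly(n)$. Thus any $\mathrm{FBPP}^{\mathrm{NP}}$ algorithm with conditional $\theta$-success rate $\Omega(1)$ at a fixed $\mG$ solves the worst case on that graph and collapses $\mathrm{PH}$ whenever the anchor of step (i) lands on that $\mG$.

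The main obstacle is step (iii). A Markov/pigeonhole argument converts a $\delta_1$ joint success rate into a $\Omega(\delta_1)$ fraction of graphs with conditional $\theta$-success rate $\Omega(\delta_1)$, and step (ii) then yields worst-case $\theta$-hardness on those particular graphs. What remains, and what I expect to be the principal difficulty, is to show that the \#P-hardness anchor of step (i) is not confined to a measure-zero slice of $\mF_d(\mH_n)$ but holds on a constant fraction of $d$-factors. My approach would be to design gadget reductions whose hardness depends only on the presence of certain local sub-structures in $\mG$, and then to invoke configuration-model-style analyses of uniform $d$-factors of $D$-regular hosts to show that a constant fraction of $\mG\in\mF_d(\mH_n)$ contain the required sub-structures, so that the two $\Omega(\delta_1)$ fractions intersect. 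This graph-embedding step is, to my knowledge, not yet available in the literature with quantitative fraction bounds over hardware-graph factor ensembles, and it is precisely this gap that forces the statement to remain a conjecture rather than a theorem.
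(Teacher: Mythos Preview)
The statement you are attempting to prove is explicitly a \emph{conjecture} in the paper; the paper does not provide a proof and does not claim one. The paper's only commentary is that its plausibility is supported by worst-case \#P-hardness results for complex Ising models on bounded-degree graphs (Goldberg--Guo; Galanis--\v{S}tefankovi\v{c}--Vigoda), and that a proof ``would require an \emph{average}-case hardness theorem for the complex-valued Ising model on simple $D$-regular graphs, showing that a multiplicative approximation to $\mZ(\mG,\theta)$ is \#P-hard on a constant fraction of $(\mG,\theta)$.'' So there is no paper proof to compare against.

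That said, your proposal is a sensible sketch of what such a proof would have to look like, and you correctly identify where the genuine obstruction lies. Your step~(i) matches the paper's cited worst-case anchor. Your step~(ii) is the standard BFNV/Movassagh-style $\theta$-only interpolation and is the part closest to being routine. Your step~(iii) is exactly the missing ingredient the paper flags: lifting worst-case hardness on some $d$-factor to a constant fraction of $\mF_d(\mH_n)$. Your own closing paragraph concedes this gap, which is honest and accurate---this is precisely why the statement is a conjecture and not a theorem. One additional caution: even step~(ii) is subtler than you indicate, since the known polynomial-interpolation worst-to-average reductions for IQP/random circuits yield average-case hardness only for \emph{exact} or near-exact computation, not for the $1\pm 1/\mathrm{poly}(n)$ multiplicative approximation appearing in the conjecture; bridging that robustness gap is itself an open problem and is part of why analogous conjectures remain unproven even in the all-to-all IQP setting.
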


This is the natural analogue of the standard IQP average-case conjecture~\cite{Bremner:2016aa}.
Next, we introduce an anticoncentration conjecture applicable to the QPU-restricted ensemble.

\begin{myconjecture}[Anticoncentration for QPU-restricted IQP ensemble]
\label{conj:host-anticonc}
Let $\{\mH_n\}$ be a family of hardware graphs as in \cref{def:QPU-R-E}.
For each $n$, draw $\mG\sim \mathrm{Unif}[\mF_d(\mH_n)]$
and $\theta\sim \mathrm{Unif}[0,2\pi)^n$. 
For any fixed choice of single-qubit angles $\{v_i\}$ in \cref{eq:HZ'}, there exist constants $a,b>0$ (depending only on $d$)
such that for every $s\in\{0,1\}^n$,
\beq
\label{eq:anticonc}
\Pr_{\mG,\theta} \bigl[ P_{\UIQP^{(\theta)}}(s)\ge a 2^{-n} \bigr] \ge b.
\eeq
\end{myconjecture}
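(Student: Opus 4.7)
The plan is to apply the Paley--Zygmund (second-moment) inequality, the standard route for anticoncentration proofs of this form~\cite{Ghosh2025PRXQ}. Writing $P\equiv P_{\UIQP^{(\theta)}}(s)$, Paley--Zygmund gives $\Pr[P\ge \tfrac{1}{2}\mathbb{E}[P]]\ge \tfrac{1}{4}(\mathbb{E}[P])^2/\mathbb{E}[P^2]$, so it suffices to establish $\mathbb{E}[P]=2^{-n}$ and $\mathbb{E}[P^2]\le C\cdot 2^{-2n}$ for some constant $C=C(d)$, yielding the conjecture with $a=1/2$ and $b=1/(4C)$.

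The first moment is immediate. Expanding $P(s)=2^{-2n}\sum_{y,y'}(-1)^{s\cdot(y\oplus y')}e^{-i[H''_Z(y)-H''_Z(y')]}$, each $\theta_i$ enters only through $e^{-i\theta_i[(-1)^{y_i}-(-1)^{y'_i}]/2}$, and Fourier orthogonality on the circle forces $y=y'$ site by site, yielding $\mathbb{E}_\theta[P(s)]=2^{-n}$ for every $s$, $\mG$, and every choice of the single-qubit angles $\{v_i\}$; no averaging over $\mG$ is needed here.

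For the second moment, applying Fourier orthogonality to four copies $y_1,\dots,y_4$ imposes, at each site $i$, the parity constraint $(-1)^{y_1^{(i)}}-(-1)^{y_2^{(i)}}+(-1)^{y_3^{(i)}}-(-1)^{y_4^{(i)}}=0$. Parametrizing the surviving configurations by $\alpha_i\equiv y_1^{(i)}\oplus y_3^{(i)}=y_2^{(i)}\oplus y_4^{(i)}\in\{0,1\}$, each $\alpha$ admits $2^{n+|\alpha|}$ tuples, on which the $v_i$- and $s$-dependent phases cancel and the two-qubit phases reduce (using $z_3=-z_1$, $z_4=-z_2$ on the $\alpha$-support together with $\tfrac{\pi}{4}\cdot 2=\tfrac{\pi}{2}$) to a character sum that evaluates to $4^{|\alpha|}$ when every vertex of the induced subgraph $\mG[\alpha]$ has even degree, and to $0$ otherwise. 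Collecting,
\begin{equation}
\mathbb{E}_{\mG,\theta}[P(s)^2]=2^{-3n}\sum_{\alpha\in\{0,1\}^n}2^{|\alpha|}\,\Pr_{\mG}\bigl[\mG[\alpha]\text{ Eulerian}\bigr],
\end{equation}
independent of $s$ and $\{v_i\}$. The conjecture therefore reduces to the purely combinatorial bound $\sum_\alpha 2^{|\alpha|}\Pr_{\mG}[\mG[\alpha]\text{ Eulerian}]\le C\cdot 2^n$.

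The hard step is controlling this sum in the $d$-factor ensemble. My approach would split by $k=|\alpha|$: in the bulk regime $k=\Theta(n)$, the goal is to show that the induced degree of each $\alpha$-vertex in a uniformly random $d$-factor of $\mH_n$ has nearly balanced parity, so that the even-degree-everywhere event costs $\sim 2^{-k}$ per vertex and exactly cancels the $2^k\binom{n}{k}$ weight; small $k$ (few constraints) and the complementary large-$k$ regime (handled by complementation) admit separate direct estimates. The cleanest route is to establish contiguity between $\mathrm{Unif}[\mF_d(\mH_n)]$ and an edge-independent model on $\mH_n$ (each $\mH_n$-edge retained independently with probability $d/D$ and conditioned on $d$-regularity), in which the Eulerian-subgraph probability factorizes over $\alpha$-vertices; this contiguity can be pursued through McKay--Wormald-type enumerations or switch-Markov-chain mixing on $d$-factors of expanders. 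The main obstacle is precisely this contiguity step: unlike the uniform random $d$-regular graph of Ref.~\cite{Ghosh2025PRXQ}, here $\mH_n$ is fixed and may have structural features (short cycles, bottlenecks, non-uniform local geometry) that correlate the edges of $\mG$ and can inflate $\Pr_{\mG}[\mG[\alpha]\text{ Eulerian}]$ for adversarially chosen $\alpha$. I expect the bound to go through whenever $\mH_n$ has sufficient local and spectral expansion (as for Zephyr-type architectures), but a worst-case proof for arbitrary bounded-degree $\mH_n$ appears beyond current techniques, which is why the statement is left as a conjecture.
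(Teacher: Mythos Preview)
Your approach is sound and aligned with the paper, but note that the paper does \emph{not} prove \cref{conj:host-anticonc} in general: it is explicitly left as a conjecture, and only the special case $\mH_n=K_n$ is established (\cref{prop:Kn-anticonc}). For that case the paper follows exactly the Paley--Zygmund route you describe: the first moment is shown to be $2^{-n}$ (\cref{lem:first-moment}, via an operator-level $\theta$-average rather than your Fourier expansion, though your observation that no $\mG$-average is needed is slightly stronger), and the second-moment bound $\mathbb{E}_{\mG,\theta}[p^2]\le C(d)2^{-2n}$ (\cref{lem:second-moment}) is obtained by identifying $\mathrm{Unif}[\mF_d(K_n)]$ with the uniform $d$-regular ensemble $\mG_r(d,n)$ and invoking Ref.~\cite[Cor.~1]{Ghosh2025PRXQ} as a black box; Paley--Zygmund with $\tau=1/2$ then yields $a=1/2$, $b=1/(4C(d))$, exactly as you write.

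Your explicit reduction of the second moment to the Eulerian-subgraph sum $2^{-3n}\sum_\alpha 2^{|\alpha|}\Pr_{\mG}[\mG[\alpha]\text{ even}]$ is correct and goes beyond what the paper writes down (the paper does not unpack the second moment combinatorially). Your diagnosis of the obstacle for general bounded-degree $\mH_n$---that correlations in the $d$-factor ensemble induced by the fixed host graph can spoil the per-vertex parity estimate, and that a contiguity argument with an edge-independent model would resolve it---is precisely the missing ingredient the paper leaves open. So there is no gap in your reasoning; you have simply arrived at the same boundary the paper does, with a more explicit combinatorial formulation of what remains to be proved.
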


\cref{conj:host-anticonc} applies directly, e.g., to the Zephyr graph (ignoring boundary effects)~\cite{Boothby2020Zephyr}, for any fixed $d\ge 3$. This connects our results directly to current quantum annealing hardware graphs. It also applies to superconducting-cavity architectures for digital-analog simulation~\cite{Lamata:2017aa}.

There is a special case in which---by appealing to Ref.~\cite{Ghosh2025PRXQ}---we can promote \cref{conj:host-anticonc} to a result, where we allow $
D$ to scale with $n$.

\begin{myproposition}[Anticoncentration for complete hardware graphs]
\label{prop:Kn-anticonc}
Let $\mH_n=K_n$ be the complete graph on $n$ labeled vertices and fix $d\ge 3$.  
Then the conclusion of \cref{conj:host-anticonc} holds for this choice of hardware graph family. 
\end{myproposition}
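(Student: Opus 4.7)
The plan is to reduce \cref{prop:Kn-anticonc} directly to the anticoncentration result of Ref.~\cite{Ghosh2025PRXQ}, which establishes anticoncentration for IQP circuits built from a graph-state entangler on a uniformly random labeled $d$-regular graph followed by random $XY$-plane single-qubit measurements. The key observation is that when the hardware graph is the complete graph $K_n$, every pair of vertices is a native edge, so $\mF_d(K_n)$ is by definition the set of all labeled $d$-regular graphs on $n$ vertices. Consequently, $\mathrm{Unif}[\mF_d(K_n)]$ is \emph{exactly} the uniform random labeled $d$-regular graph ensemble analyzed in Ref.~\cite{Ghosh2025PRXQ} (restricting to $n$ such that $nd$ is even, as required for $d$-factors to exist; this is the parity assumption implicit in \cref{def:QPU-R-E}).

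Next, I would match the circuit structure. In \cref{eq:IQP-new} the two-qubit phases are fixed to $\pi/4$, so $e^{-iH'_Z}$ applied to $W^{\otimes n}\ket{0^{\otimes n}}$ produces the graph state of $\mG$ up to single-qubit $Z$ rotations by $\{v_i\}$. The final layer $\UR^{(\theta)}=\bigotimes_i W R_Z(\theta_i)$ implements an $XY$-plane product measurement at angles $\{\theta_i\}$, which is precisely the measurement model of Ref.~\cite{Ghosh2025PRXQ}.

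The subtle point is that \cref{conj:host-anticonc} allows \emph{arbitrary but fixed} one-qubit angles $\{v_i\}$, whereas Ref.~\cite{Ghosh2025PRXQ} randomizes the measurement basis. This is handled by absorption: the per-qubit angle seen by the measurement is $v_i + \theta_i/2 \pmod{2\pi}$, and since $\theta_i$ is i.i.d.\ uniform on $[0,2\pi)$, the shifted angle is also uniform on $[0,2\pi)$. The joint distribution of output probabilities $P_{\UIQP^{(\theta)}}(s)$ over $(\mG,\theta)$ is therefore identical to the one studied in Ref.~\cite{Ghosh2025PRXQ}, for \emph{any} deterministic choice of $\{v_i\}$. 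Applying their theorem yields constants $a,b>0$ depending only on $d$ such that \cref{eq:anticonc} holds for every $s\in\{0,1\}^n$.

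The main obstacle I foresee is verifying that the precise circuit considered in Ref.~\cite{Ghosh2025PRXQ} coincides, up to the shift-invariance argument above, with the circuit in \cref{eq:IQP-new}: one must check conventions for the sign of the entangling phase, the placement of the Hadamard layers, and whether their anticoncentration bound is stated for a specific bitstring $s$ (as needed here) or only averaged over $s$. If the latter, one would invoke the symmetry of the ensemble under Pauli-$X$ byproduct operators (which simply permute output bitstrings) to promote the averaged statement to a per-$s$ statement, recovering \cref{eq:anticonc} with possibly renormalized constants $a,b$.
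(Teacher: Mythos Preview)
Your proposal is correct and follows the same route as the paper: identify $\mathrm{Unif}[\mF_d(K_n)]$ with the uniform labeled $d$-regular ensemble $\mG_r(d,n)$, match the circuit of \cref{eq:IQP-new} to the graph-state-plus-$XY$-measurement model of Ref.~\cite{Ghosh2025PRXQ}, absorb the fixed $\{v_i\}$ into the uniform $\theta_i$, and use symmetry to obtain a per-$s$ statement. Your ``obstacle'' paragraph in fact anticipates exactly what the paper does: Ref.~\cite{Ghosh2025PRXQ} is invoked only for the \emph{collision-probability} bound $\mathbb{E}_{\mG,\theta}[m_2]\le C(d)$ (their Cor.~1), the paper proves the first moment $\mathbb{E}_{\mG,\theta}[p_{\mG,\theta}(s)]=2^{-n}$ directly, promotes the second-moment bound to a per-$s$ bound via the symmetry argument you sketch (computing $\mathbb{E}_\theta[N_s(\theta)^{\otimes 2}]$ explicitly and showing it is independent of $s$), and then applies Paley--Zygmund to get $a=\tfrac12$, $b=1/(4C(d))$.

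One small slip: $\theta_i\sim\mathrm{Unif}[0,2\pi)$ implies $\theta_i/2$ is uniform on $[0,\pi)$, not $[0,2\pi)$, so $v_i+\theta_i/2$ is \emph{not} uniform on $[0,2\pi)$. The absorption argument still works because the output probabilities depend on the single-qubit phase only modulo $\pi$ (since $e^{-i\pi Z_i}=-I$ is a global phase), and $v_i+\theta_i/2\pmod{\pi}$ \emph{is} uniform on $[0,\pi)$ for any fixed $v_i$.
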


The proof is given in End Matter.
The all-to-all case in \cref{prop:Kn-anticonc} is directly relevant for hardware modalities with dense connectivity graphs, such as trapped ions or neutral atoms.  
In such systems, programmable Ising interactions on arbitrary pairs of ions~\cite{linkeExperimentalComparisonTwo2017,Monroe:2021aa,Kim:2010aa,Islam:2013mi,DeCross:2025aa} or atoms~\cite{Glaetzle:2017aa,Bernien:2017aa,Scholl:2021aa,Ebadi:22} are routinely implemented, so the QPU-restricted graph ensemble for $\mH_n=K_n$ can be realized natively.  

We are finally ready for the sampling hardness result, which formalizes quantum supremacy for DADQC using $D$-regular hardware graphs.

\begin{mytheorem}[Supremacy for DADQC on a fixed $D$-regular hardware graph family]
\label{thm:host-sup}
Let $\{\mH_n\}$ be a family of hardware graphs as in \cref{def:QPU-R-E}.
For each $n$ and run of $\UDAD$:
(1) Form an ensemble $(\mG,\t)$ by drawing $\mG\sim\mathrm{Unif}(\mF_d(\mH_n))$
and $\theta\sim \mathrm{Unif}[0,2\pi)^n$. 
(2) Set $\UL=S^\dagger W^{\otimes n}$ and $U_{\mathrm{R}}=U_{\mathrm{R}}^{(\theta)}$ as in \cref{eq:RZ-UR}.
(3) Choose smooth, monotone schedules $A(t),B(t)$ as in \cref{eq:HTFI} and pick $\beta=\int_{\mI_2} B(t) dt$ so that the analog block implements $e^{-i\beta \HI}$ with $\beta J_{ij}=\pi/4$ for $(i,j)\in \mE_{\mG}$ and $\beta J_{ij}=0$ otherwise.
Then, 
assuming Conjectures~\ref{conj:host-avgcase} and \ref{conj:host-anticonc}, there exist constants $\varepsilon_\star,\delta_2>0$ such that for any $0<\varepsilon<\varepsilon_\star$ one can choose schedules $A(t),B(t)$ so that $\eta\|\HI\|+K \beta \Delta\alpha\le \varepsilon/2$. For such schedules, any classical algorithm that, on a fraction $> \delta_2$ of instances from the ensemble $(\mG,\theta)$ samples within TV distance $<\varepsilon/2$ of the DADQC output distribution would imply that $\mathrm{PH}$ collapses down to its third level. Thus (unless 
$\mathrm{PH}$ collapses) no such classical algorithm exists.
\end{mytheorem}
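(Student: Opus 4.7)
My plan is to follow the Bremner--Montanaro--Shepherd (BMS) template for IQP sampling hardness, using \cref{lem:main} as a one-shot reduction from the DADQC output to an IQP circuit and then adapting the Stockmeyer-plus-anticoncentration arguments to the QPU-restricted ensemble of \cref{def:QPU-R-E}. First, I fix an instance $(\mG,\theta)$ drawn from $\mathrm{Unif}[\mF_d(\mH_n)]\times\mathrm{Unif}[0,2\pi)^n$ and apply \cref{lem:main} with $U'_{\mathrm{R}}=\UR^{(\theta)}$ and $H_Z=\beta\HI$. Because $\beta J_{ij}=\pi/4$ on edges of $\mG$ and $0$ elsewhere, the target unitary $\UpIQP$ of \cref{lem:main} coincides with $\UIQP^{(\theta)}$ of \cref{eq:IQP-new}. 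The scheduling hypothesis $\eta\|\HI\|+K\beta\Delta\alpha\le\varepsilon/2$ then gives $D\bigl(\PpDAD,P_{\UIQP^{(\theta)}}\bigr)\le\varepsilon/2$, so by the triangle inequality any classical sampler within TV distance $<\varepsilon/2$ of $\PpDAD$ is within TV distance $<\varepsilon$ of $P_{\UIQP^{(\theta)}}$. This step eliminates all DADQC-specific (analog) features and reduces the claim to classical hardness of sampling from $P_{\UIQP^{(\theta)}}$ under \cref{conj:host-avgcase,conj:host-anticonc}.

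Next, I would run the standard BMS reduction on this IQP family. Write $P_{\UIQP^{(\theta)}}(s)=4^{-n}\lvert\mZ(\mG,\theta_s)\rvert^2$, where $\mZ$ is the complex-temperature Ising partition function of $H''_Z(\mG,\theta)$ with effective single-qubit angles absorbing the bitstring $s$ in the usual way. Assume for contradiction that a classical $\mathrm{BPP}$ sampler achieves TV error $<\varepsilon/2$ on a fraction $>\delta_2$ of instances. Stockmeyer's approximate counting algorithm then supplies an $\mathrm{FBPP}^{\mathrm{NP}}$ procedure that, after a Markov inequality over $s$, additively approximates $P_{\UIQP^{(\theta)}}(s)$ to error $O(\varepsilon 2^{-n}/\poly(n))$ on most $s$. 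Anticoncentration (\cref{conj:host-anticonc}) promotes this to a $1\pm 1/\poly(n)$ multiplicative approximation of $\lvert\mZ(\mG,\theta_s)\rvert^2$, and hence of $\mZ(\mG,\theta_s)$ up to a controllable phase, on a constant fraction of triples $(\mG,\theta,s)$. The Aaronson--Arkhipov swap of the randomness over $s$ into the randomness over $(\mG,\theta)$ then yields an $\mathrm{FBPP}^{\mathrm{NP}}$ multiplicative approximation of $\mZ$ on a fraction $>\delta_1$ of instances, contradicting \cref{conj:host-avgcase} and collapsing $\mathrm{PH}$ to its third level via Toda's theorem.

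The main obstacle is the tight constant bookkeeping across three independent failure modes: the sampler fails on a fraction $1-\delta_2$ of instances; the Stockmeyer-derived additive approximation is accurate only on most (not all) bitstrings $s$ per instance; and anticoncentration holds only on a $\ge b$ fraction of triples. One must choose $\varepsilon_\star$ and $\delta_2$ as explicit functions of the constants $a,b$ from \cref{conj:host-anticonc} and $\delta_1$ from \cref{conj:host-avgcase}, so that after chaining two Markov-type inequalities the surviving ``good'' fraction of instances still exceeds $\delta_1$. This is the same delicate accounting as in the original BMS proof and its Aaronson--Arkhipov predecessor, modulo two substitutions: the random IQP ensemble is replaced by the QPU-restricted ensemble of \cref{def:QPU-R-E}, and the final-basis randomization by $\UR^{(\theta)}$ plays the role of the random $XY$-plane measurements of Ref.~\cite{Ghosh2025PRXQ}. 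Both substitutions have already been shown compatible with the BMS machinery in that reference, so pinning down $\varepsilon_\star,\delta_2$ is the only genuinely new technical step, and everything else is mechanical.
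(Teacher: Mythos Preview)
Your proposal is correct and follows essentially the same route as the paper's proof: apply \cref{lem:main} with $U'_{\mathrm{R}}=\UR^{(\theta)}$ to get the DADQC output within TV distance $\varepsilon/2$ of $P_{\UIQP^{(\theta)}}$, use the triangle inequality to pass any $\varepsilon/2$-accurate classical sampler to an $\varepsilon$-accurate IQP sampler, and then invoke the standard Stockmeyer-plus-anticoncentration reduction against \cref{conj:host-avgcase} to collapse $\mathrm{PH}$. The paper's version is terser---it simply declares that ``the mapping between $P_{\UIQP^{(\theta)}}(s)$ and $\mZ(\mG,\theta)$ is the standard one'' and packages the constant bookkeeping into a single line $\delta_2:=\delta_1/c$---whereas you spell out the Markov-inequality chain and the Aaronson--Arkhipov randomness swap more explicitly; but the logic is the same. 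One minor caution: your remark about recovering ``$\mZ(\mG,\theta_s)$ up to a controllable phase'' is imprecise, since sampling only gives access to $|\mZ|^2$ and the phase is not recoverable; the hardness conjecture should be read (as in the BMS original) as a statement about multiplicative approximation of $|\mZ|^2$, and you should phrase that step accordingly.
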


\noindent\emph{Proof sketch.} By \cref{lem:main} applied with $U_{\mathrm{R}}=U_{\mathrm{R}}^{(\theta)}$, the device output is within $\varepsilon/2$ of $P_{\UIQP^{(\theta)}}$. \Cref{conj:host-avgcase} 
rules out an $\mathrm{FBPP}^{\mathrm{NP}}$
 algorithm that multiplicatively approximates 
$\mZ(\mG,\theta)$ on more than a $\delta_1$-fraction of instances (unless PH collapses), and the ensemble $(\mG,\theta)$ anticoncentrates by \cref{conj:host-anticonc} or \cref{prop:Kn-anticonc}. A standard Stockmeyer-type argument then yields the claimed PH collapse if a classical sampler can achieve TV distance $<\varepsilon$ on $> \delta_2$ of the instances. See End Matter for the full proof.

We stress that \cref{thm:host-sup} relies only on \cref{lem:main} 
and the choice $\beta J_{ij}=\pi/4$ on active edges together with random $XY$-plane product measurements; 
the $\pi/8$-grid used in \cref{thm:sup} is not required here. In particular, the $\{\beta h_i\}$ can be arbitrary, e.g., set to zero for convenience; any static $Z$-field offsets simply shift the random measurement angles and therefore do not affect the argument.

\textit{Conclusion}.---%
We analyzed a minimal digital-analog-digital quantum computing  (DADQC) model: a single continuous transverse-field Ising block sandwiched between arbitrary one-qubit layers. 
We proved (\cref{lem:main}) that with properly designed, smooth schedules compatible with present-day annealers, the device unitary is within constant TV distance of an IQP circuit with angles on the $\{\frac{\pi}{8}k\}$ grid. This leads to a supremacy result for the DADQC model when the hardware graph matches the IQP graph (\cref{thm:sup}).

To obtain a hardware-faithful result on fixed-topology devices, we matched the IQP ensemble to what such hardware can implement: on each run, the device implements a uniformly random $d$-factor of the fixed $D$-regular hardware graph (\cref{def:QPU-R-E}) and applies random $XY$-plane product measurements via the final one-qubit layer $U_{\mathrm{R}}^{(\theta)}$. 
Combining our DADQC-IQP closeness bound with a natural average-case hardness conjecture for the corresponding complex-temperature Ising partition functions (\cref{conj:host-avgcase}), and an anticoncentration conjecture for $D$-regular hardware graphs (\cref{conj:host-anticonc}) that we prove for complete graphs (\cref{prop:Kn-anticonc}), we obtain \cref{thm:host-sup}: the DADQC model can achieve quantum supremacy on fixed $D$-regular hardware graphs.

Conceptually, this places DADQC alongside RCS, boson sampling, and QAOA as a viable route to quantum supremacy, while remaining directly compatible with analog hardware. 
Our results motivate several directions: (i) proving Conjectures~\ref{conj:host-avgcase} and \ref{conj:host-anticonc}; (ii) quantifying robustness under time-dependent control errors and decoherence; and (iii) implementing the DADQC model on current QPUs, such as quantum annealers, trapped ions, and neutral atoms.

\textit{Acknowledgments}.---Discussions with the D-Wave team are gratefully acknowledged. This material is based upon work supported by, or in part by, the U. S. Army Research Laboratory and the U. S. Army Research Office under contract/grant number W911NF2310255.

\bibliographystyle{apsrev4-2}

\begin{thebibliography}{64}%
\makeatletter
\providecommand \@ifxundefined [1]{%
 \@ifx{#1\undefined}
}%
\providecommand \@ifnum [1]{%
 \ifnum #1\expandafter \@firstoftwo
 \else \expandafter \@secondoftwo
 \fi
}%
\providecommand \@ifx [1]{%
 \ifx #1\expandafter \@firstoftwo
 \else \expandafter \@secondoftwo
 \fi
}%
\providecommand \natexlab [1]{#1}%
\providecommand \enquote  [1]{``#1''}%
\providecommand \bibnamefont  [1]{#1}%
\providecommand \bibfnamefont [1]{#1}%
\providecommand \citenamefont [1]{#1}%
\providecommand \href@noop [0]{\@secondoftwo}%
\providecommand \href [0]{\begingroup \@sanitize@url \@href}%
\providecommand \@href[1]{\@@startlink{#1}\@@href}%
\providecommand \@@href[1]{\endgroup#1\@@endlink}%
\providecommand \@sanitize@url [0]{\catcode `\\12\catcode `\$12\catcode
  `\&12\catcode `\#12\catcode `\^12\catcode `\_12\catcode `\%12\relax}%
\providecommand \@@startlink[1]{}%
\providecommand \@@endlink[0]{}%
\providecommand \url  [0]{\begingroup\@sanitize@url \@url }%
\providecommand \@url [1]{\endgroup\@href {#1}{\urlprefix }}%
\providecommand \urlprefix  [0]{URL }%
\providecommand \Eprint [0]{\href }%
\providecommand \doibase [0]{https://doi.org/}%
\providecommand \selectlanguage [0]{\@gobble}%
\providecommand \bibinfo  [0]{\@secondoftwo}%
\providecommand \bibfield  [0]{\@secondoftwo}%
\providecommand \translation [1]{[#1]}%
\providecommand \BibitemOpen [0]{}%
\providecommand \bibitemStop [0]{}%
\providecommand \bibitemNoStop [0]{.\EOS\space}%
\providecommand \EOS [0]{\spacefactor3000\relax}%
\providecommand \BibitemShut  [1]{\csname bibitem#1\endcsname}%
\let\auto@bib@innerbib\@empty
\bibitem [{\citenamefont {Preskill}(2012)}]{Preskill:2012aa}%
  \BibitemOpen
  \bibfield  {author} {\bibinfo {author} {\bibfnamefont {J.}~\bibnamefont
  {Preskill}},\ }\href {http://arXiv.org/abs/1203.5813} (\bibinfo {year} {2012}),\
  \Eprint {https://arxiv.org/abs/1203.5813} {arXiv:1203.5813 [quant-ph]}
  \BibitemShut {NoStop}%
\bibitem [{\citenamefont {Aaronson}\ and\ \citenamefont
  {Arkhipov}(2011)}]{Aaronson2011}%
  \BibitemOpen
  \bibfield  {author} {\bibinfo {author} {\bibfnamefont {S.}~\bibnamefont
  {Aaronson}}\ and\ \bibinfo {author} {\bibfnamefont {A.}~\bibnamefont
  {Arkhipov}},\ }in\ \href {https://doi.org/10.1145/1993636.1993682} {\emph
  {\bibinfo {booktitle} {Proceedings of the Forty-Third Annual ACM Symposium on
  Theory of Computing}}},\ \bibinfo {series and number} {STOC '11}\ (\bibinfo
  {publisher} {Association for Computing Machinery},\ \bibinfo {address} {New
  York, NY, USA},\ \bibinfo {year} {2011})\ pp.\ \bibinfo {pages}
  {333--342}\BibitemShut {NoStop}%
\bibitem [{\citenamefont {Bremner}\ \emph {et~al.}(2011)\citenamefont
  {Bremner}, \citenamefont {Jozsa},\ and\ \citenamefont
  {Shepherd}}]{Bremner2010}%
  \BibitemOpen
  \bibfield  {author} {\bibinfo {author} {\bibfnamefont {M.~J.}\ \bibnamefont
  {Bremner}}, \bibinfo {author} {\bibfnamefont {R.}~\bibnamefont {Jozsa}},\
  and\ \bibinfo {author} {\bibfnamefont {D.~J.}\ \bibnamefont {Shepherd}},\
  }\href {https://doi.org/10.1098/rspa.2010.0301} {\bibfield  {journal}
  {\bibinfo  {journal} {Proceedings of the Royal Society A: Mathematical,
  Physical and Engineering Sciences}\ }\textbf {\bibinfo {volume} {467}},\
  \bibinfo {pages} {459} (\bibinfo {year} {2011})}\BibitemShut {NoStop}%
\bibitem [{\citenamefont {Bremner}\ \emph {et~al.}(2016)\citenamefont
  {Bremner}, \citenamefont {Montanaro},\ and\ \citenamefont
  {Shepherd}}]{Bremner:2016aa}%
  \BibitemOpen
  \bibfield  {author} {\bibinfo {author} {\bibfnamefont {M.~J.}\ \bibnamefont
  {Bremner}}, \bibinfo {author} {\bibfnamefont {A.}~\bibnamefont {Montanaro}},\
  and\ \bibinfo {author} {\bibfnamefont {D.~J.}\ \bibnamefont {Shepherd}},\
  }\href {https://link.aps.org/doi/10.1103/PhysRevLett.117.080501} {\bibfield
  {journal} {\bibinfo  {journal} {Physical Review Letters}\ }\textbf {\bibinfo
  {volume} {117}},\ \bibinfo {pages} {080501} (\bibinfo {year}
  {2016})}\BibitemShut {NoStop}%
\bibitem [{\citenamefont {Arute}\ \emph {et~al.}(2019)\citenamefont {Arute}
  \emph {et~al.}}]{Arute2019}%
  \BibitemOpen
  \bibfield  {author} {\bibinfo {author} {\bibfnamefont {F.}~\bibnamefont
  {Arute}} \emph {et~al.},\ }\href {https://doi.org/10.1038/s41586-019-1666-5}
  {\bibfield  {journal} {\bibinfo  {journal} {Nature}\ }\textbf {\bibinfo
  {volume} {574}},\ \bibinfo {pages} {505} (\bibinfo {year}
  {2019})}\BibitemShut {NoStop}%
\bibitem [{\citenamefont {Zhu}\ \emph {et~al.}(2022)\citenamefont {Zhu} \emph
  {et~al.}}]{zhuQuantumComputationalAdvantage2022}%
  \BibitemOpen
  \bibfield  {author} {\bibinfo {author} {\bibfnamefont {Q.}~\bibnamefont
  {Zhu}} \emph {et~al.},\ }\href {https://doi.org/10.1016/j.scib.2021.10.017}
  {\bibfield  {journal} {\bibinfo  {journal} {Science Bulletin}\ }\textbf
  {\bibinfo {volume} {67}},\ \bibinfo {pages} {240} (\bibinfo {year}
  {2022})}\BibitemShut {NoStop}%
\bibitem [{\citenamefont {Morvan}\ \emph {et~al.}(2024)\citenamefont {Morvan}
  \emph {et~al.}}]{morvan2023phase}%
  \BibitemOpen
  \bibfield  {author} {\bibinfo {author} {\bibfnamefont {A.}~\bibnamefont
  {Morvan}} \emph {et~al.},\ }\href
  {https://doi.org/10.1038/s41586-024-07998-6} {\bibfield  {journal} {\bibinfo
  {journal} {Nature}\ }\textbf {\bibinfo {volume} {634}},\ \bibinfo {pages}
  {328} (\bibinfo {year} {2024})}\BibitemShut {NoStop}%
\bibitem [{\citenamefont {Gao}\ \emph {et~al.}(2025)\citenamefont {Gao} \emph
  {et~al.}}]{Gao:2025aa}%
  \BibitemOpen
  \bibfield  {author} {\bibinfo {author} {\bibfnamefont {D.}~\bibnamefont
  {Gao}} \emph {et~al.},\ }\href
  {https://doi.org/10.1103/PhysRevLett.134.090601} {\bibfield  {journal}
  {\bibinfo  {journal} {Physical Review Letters}\ }\textbf {\bibinfo {volume}
  {134}},\ \bibinfo {pages} {090601} (\bibinfo {year} {2025})}\BibitemShut
  {NoStop}%
\bibitem [{\citenamefont {DeCross}\ \emph {et~al.}(2025)\citenamefont {DeCross}
  \emph {et~al.}}]{DeCross:2025aa}%
  \BibitemOpen
  \bibfield  {author} {\bibinfo {author} {\bibfnamefont {M.}~\bibnamefont
  {DeCross}} \emph {et~al.},\ }\href
  {https://doi.org/10.1103/PhysRevX.15.021052} {\bibfield  {journal} {\bibinfo
  {journal} {Physical Review X}\ }\textbf {\bibinfo {volume} {15}},\ \bibinfo
  {pages} {021052} (\bibinfo {year} {2025})}\BibitemShut {NoStop}%
\bibitem [{\citenamefont {Zhong}\ \emph {et~al.}(2020)\citenamefont {Zhong}
  \emph {et~al.}}]{Zhong:2020}%
  \BibitemOpen
  \bibfield  {author} {\bibinfo {author} {\bibfnamefont {H.-S.}\ \bibnamefont
  {Zhong}} \emph {et~al.},\ }\href {https://doi.org/10.1126/science.abe8770}
  {\bibfield  {journal} {\bibinfo  {journal} {Science}\ }\textbf {\bibinfo
  {volume} {370}},\ \bibinfo {pages} {1460} (\bibinfo {year}
  {2020})}\BibitemShut {NoStop}%
\bibitem [{\citenamefont {Madsen}\ \emph {et~al.}(2022)\citenamefont {Madsen}
  \emph {et~al.}}]{Madsen2022}%
  \BibitemOpen
  \bibfield  {author} {\bibinfo {author} {\bibfnamefont {L.~S.}\ \bibnamefont
  {Madsen}} \emph {et~al.},\ }\href
  {https://doi.org/10.1038/s41586-022-04725-x} {\bibfield  {journal} {\bibinfo
  {journal} {Nature}\ }\textbf {\bibinfo {volume} {606}},\ \bibinfo {pages}
  {75} (\bibinfo {year} {2022})}\BibitemShut {NoStop}%
\bibitem [{\citenamefont {Deng}\ \emph {et~al.}(2023)\citenamefont {Deng} \emph
  {et~al.}}]{Deng:2023}%
  \BibitemOpen
  \bibfield  {author} {\bibinfo {author} {\bibfnamefont {Y.-H.}\ \bibnamefont
  {Deng}} \emph {et~al.},\ }\href
  {https://doi.org/10.1103/PhysRevLett.131.150601} {\bibfield  {journal}
  {\bibinfo  {journal} {Physical Review Letters}\ }\textbf {\bibinfo {volume}
  {131}},\ \bibinfo {pages} {150601} (\bibinfo {year} {2023})}\BibitemShut
  {NoStop}%
\bibitem [{\citenamefont {Farhi}\ \emph {et~al.}(2014)\citenamefont {Farhi},
  \citenamefont {Goldstone},\ and\ \citenamefont {Gutmann}}]{farhi2014quantum}%
  \BibitemOpen
  \bibfield  {author} {\bibinfo {author} {\bibfnamefont {E.}~\bibnamefont
  {Farhi}}, \bibinfo {author} {\bibfnamefont {J.}~\bibnamefont {Goldstone}},\
  and\ \bibinfo {author} {\bibfnamefont {S.}~\bibnamefont {Gutmann}},\ }\href
  {https://arxiv.org/abs/1411.4028}  (\bibinfo {year} {2014}),\ \Eprint
  {https://arxiv.org/abs/1411.4028} {arXiv:1411.4028 [quant-ph]} \BibitemShut
  {NoStop}%
\bibitem [{\citenamefont {Farhi}\ and\ \citenamefont
  {Harrow}(2016)}]{FarhiHarrow-QAOA}%
  \BibitemOpen
  \bibfield  {author} {\bibinfo {author} {\bibfnamefont {E.}~\bibnamefont
  {Farhi}}\ and\ \bibinfo {author} {\bibfnamefont {A.~W.}\ \bibnamefont
  {Harrow}},\ }\href {https://arxiv.org/abs/1602.07674} {\bibfield  {journal}
  {\bibinfo  {journal} {arXiv:1602.07674}\ } (\bibinfo {year}
  {2016})}\BibitemShut {NoStop}%
\bibitem [{\citenamefont {Krovi}(2022)}]{KroviQAOAAvgCase}%
  \BibitemOpen
  \bibfield  {author} {\bibinfo {author} {\bibfnamefont {H.}~\bibnamefont
  {Krovi}},\ }\href {https://arxiv.org/abs/2206.05642}  (\bibinfo {year} {2022}),\
  \Eprint {https://arxiv.org/abs/2206.05642} {arXiv:2206.05642 [quant-ph]}
  \BibitemShut {NoStop}%
\bibitem [{\citenamefont {Bravyi}\ and\ \citenamefont
  {Terhal}(2009)}]{Bravyi:2009sp}%
  \BibitemOpen
  \bibfield  {author} {\bibinfo {author} {\bibfnamefont {S.}~\bibnamefont
  {Bravyi}}\ and\ \bibinfo {author} {\bibfnamefont {B.}~\bibnamefont
  {Terhal}},\ }\bibfield  {booktitle} {\emph {\bibinfo {booktitle} {SIAM
  Journal on Computing}},\ }\href {https://doi.org/10.1137/08072689X}
  {\bibfield  {journal} {\bibinfo  {journal} {SIAM Journal on Computing}\
  }\textbf {\bibinfo {volume} {39}},\ \bibinfo {pages} {1462} (\bibinfo {year}
  {2009})}\BibitemShut {NoStop}%
\bibitem [{\citenamefont {Albash}\ and\ \citenamefont
  {Lidar}(2018)}]{Albash-Lidar:RMP}%
  \BibitemOpen
  \bibfield  {author} {\bibinfo {author} {\bibfnamefont {T.}~\bibnamefont
  {Albash}}\ and\ \bibinfo {author} {\bibfnamefont {D.~A.}\ \bibnamefont
  {Lidar}},\ }\href {https://link.aps.org/doi/10.1103/RevModPhys.90.015002}
  {\bibfield  {journal} {\bibinfo  {journal} {Reviews of Modern Physics}\
  }\textbf {\bibinfo {volume} {90}},\ \bibinfo {pages} {015002} (\bibinfo
  {year} {2018})}\BibitemShut {NoStop}%
\bibitem [{\citenamefont {Hauke}\ \emph {et~al.}(2020)\citenamefont {Hauke},
  \citenamefont {Katzgraber}, \citenamefont {Lechner}, \citenamefont
  {Nishimori},\ and\ \citenamefont {Oliver}}]{Hauke:2019aa}%
  \BibitemOpen
  \bibfield  {author} {\bibinfo {author} {\bibfnamefont {P.}~\bibnamefont
  {Hauke}}, \bibinfo {author} {\bibfnamefont {H.~G.}\ \bibnamefont
  {Katzgraber}}, \bibinfo {author} {\bibfnamefont {W.}~\bibnamefont {Lechner}},
  \bibinfo {author} {\bibfnamefont {H.}~\bibnamefont {Nishimori}},\ and\
  \bibinfo {author} {\bibfnamefont {W.~D.}\ \bibnamefont {Oliver}},\ }\href
  {https://iopscience.iop.org/article/10.1088/1361-6633/ab85b8} {\bibfield
  {journal} {\bibinfo  {journal} {Reports on Progress in Physics}\ } (\bibinfo
  {year} {2020})}\BibitemShut {NoStop}%
\bibitem [{\citenamefont {Somma}\ \emph {et~al.}(2012)\citenamefont {Somma},
  \citenamefont {Nagaj},\ and\ \citenamefont {Kieferov{\'a}}}]{Somma:2012kx}%
  \BibitemOpen
  \bibfield  {author} {\bibinfo {author} {\bibfnamefont {R.~D.}\ \bibnamefont
  {Somma}}, \bibinfo {author} {\bibfnamefont {D.}~\bibnamefont {Nagaj}},\ and\
  \bibinfo {author} {\bibfnamefont {M.}~\bibnamefont {Kieferov{\'a}}},\ }\href
  {http://link.aps.org/doi/10.1103/PhysRevLett.109.050501} {\bibfield
  {journal} {\bibinfo  {journal} {Phys. Rev. Lett.}\ }\textbf {\bibinfo
  {volume} {109}},\ \bibinfo {pages} {050501} (\bibinfo {year}
  {2012})}\BibitemShut {NoStop}%
\bibitem [{\citenamefont {Hastings}(2021)}]{hastings2020power}%
  \BibitemOpen
  \bibfield  {author} {\bibinfo {author} {\bibfnamefont {M.~B.}\ \bibnamefont
  {Hastings}},\ }\href {https://doi.org/10.22331/q-2021-12-06-597} {\bibfield
  {journal} {\bibinfo  {journal} {{Quantum}}\ }\textbf {\bibinfo {volume}
  {5}},\ \bibinfo {pages} {597} (\bibinfo {year} {2021})}\BibitemShut {NoStop}%
\bibitem [{\citenamefont {Gily{\'e}n}\ \emph {et~al.}(2021)\citenamefont
  {Gily{\'e}n}, \citenamefont {Hastings},\ and\ \citenamefont
  {Vazirani}}]{Gilyen:2020aa}%
  \BibitemOpen
  \bibfield  {author} {\bibinfo {author} {\bibfnamefont {A.}~\bibnamefont
  {Gily{\'e}n}}, \bibinfo {author} {\bibfnamefont {M.~B.}\ \bibnamefont
  {Hastings}},\ and\ \bibinfo {author} {\bibfnamefont {U.}~\bibnamefont
  {Vazirani}},\ }in\ \href {https://doi.org/10.1145/3406325.3451060} {\emph
  {\bibinfo {booktitle} {Proceedings of the 53rd Annual ACM SIGACT Symposium on
  Theory of Computing (STOC '21)}}}\ (\bibinfo  {publisher} {Association for
  Computing Machinery},\ \bibinfo {address} {New York, NY, USA},\ \bibinfo
  {year} {2021})\ pp.\ \bibinfo {pages} {818--829},\ \bibinfo {note}
  {arXiv:2011.09495}\BibitemShut {NoStop}%
\bibitem [{\citenamefont {King}\ \emph {et~al.}(2021)\citenamefont {King} \emph
  {et~al.}}]{King:2019aa}%
  \BibitemOpen
  \bibfield  {author} {\bibinfo {author} {\bibfnamefont {A.~D.}\ \bibnamefont
  {King}} \emph {et~al.},\ }\href {https://doi.org/10.1038/s41467-021-20901-5}
  {\bibfield  {journal} {\bibinfo  {journal} {Nature Communications}\ }\textbf
  {\bibinfo {volume} {12}},\ \bibinfo {pages} {1113} (\bibinfo {year}
  {2021})}\BibitemShut {NoStop}%
\bibitem [{\citenamefont {King}\ \emph {et~al.}(2023)\citenamefont {King} \emph
  {et~al.}}]{King:22}%
  \BibitemOpen
  \bibfield  {author} {\bibinfo {author} {\bibfnamefont {A.~D.}\ \bibnamefont
  {King}} \emph {et~al.},\ }\href {https://doi.org/10.1038/s41586-023-05867-2}
  {\bibfield  {journal} {\bibinfo  {journal} {Nature}\ }\textbf {\bibinfo
  {volume} {617}},\ \bibinfo {pages} {61} (\bibinfo {year} {2023})}\BibitemShut
  {NoStop}%
\bibitem [{\citenamefont {King}\ \emph {et~al.}(2022)\citenamefont {King} \emph
  {et~al.}}]{king2022coherent}%
  \BibitemOpen
  \bibfield  {author} {\bibinfo {author} {\bibfnamefont {A.~D.}\ \bibnamefont
  {King}} \emph {et~al.},\ }\href {https://doi.org/10.1038/s41567-022-01741-6}
  {\bibfield  {journal} {\bibinfo  {journal} {Nature Physics}\ }\textbf
  {\bibinfo {volume} {18}},\ \bibinfo {pages} {1324} (\bibinfo {year}
  {2022})}\BibitemShut {NoStop}%
\bibitem [{\citenamefont {Ebadi}\ \emph {et~al.}(2022)\citenamefont {Ebadi}
  \emph {et~al.}}]{Ebadi:22}%
  \BibitemOpen
  \bibfield  {author} {\bibinfo {author} {\bibfnamefont {S.}~\bibnamefont
  {Ebadi}} \emph {et~al.},\ }\href {https://doi.org/10.1126/science.abo6587}
  {\bibfield  {journal} {\bibinfo  {journal} {Science}\ }\textbf {\bibinfo
  {volume} {376}},\ \bibinfo {pages} {1209} (\bibinfo {year}
  {2022})}\BibitemShut {NoStop}%
\bibitem [{\citenamefont {Munoz-Bauza}\ and\ \citenamefont
  {Lidar}(2025)}]{bauza2024scaling}%
  \BibitemOpen
  \bibfield  {author} {\bibinfo {author} {\bibfnamefont {H.}~\bibnamefont
  {Munoz-Bauza}}\ and\ \bibinfo {author} {\bibfnamefont {D.}~\bibnamefont
  {Lidar}},\ }\href {https://doi.org/10.1103/PhysRevLett.134.160601} {\bibfield
   {journal} {\bibinfo  {journal} {Physical Review Letters}\ }\textbf {\bibinfo
  {volume} {134}},\ \bibinfo {pages} {160601} (\bibinfo {year}
  {2025})}\BibitemShut {NoStop}%
\bibitem [{\citenamefont {King}\ \emph {et~al.}(2025)\citenamefont {King} \emph
  {et~al.}}]{king2024computationalsupremacyquantumsimulation}%
  \BibitemOpen
  \bibfield  {author} {\bibinfo {author} {\bibfnamefont {A.~D.}\ \bibnamefont
  {King}} \emph {et~al.},\ }\href {https://doi.org/10.1126/science.ado6285}
  {\bibfield  {journal} {\bibinfo  {journal} {Science}\ }\textbf {\bibinfo
  {volume} {388}},\ \bibinfo {pages} {199} (\bibinfo {year}
  {2025})}\BibitemShut {NoStop}%
\bibitem [{\citenamefont {Deshpande}(2025)}]{DeshpandeAQC2025}%
  \BibitemOpen
  \bibfield  {author} {\bibinfo {author} {\bibfnamefont {R.}~\bibnamefont
  {Deshpande}},\ }\href {https://aqc25.phas.ubc.ca/program/invited-speakers/}
  {\bibinfo {title} {New modes of operating commercial quantum annealing
  processors}},\ \bibinfo {howpublished} {Invited talk at AQC 2025 (Adiabatic
  Quantum Computing Conference)} (\bibinfo {year} {2025})\BibitemShut {NoStop}%
\bibitem [{\citenamefont {Glaetzle}\ \emph {et~al.}(2017)\citenamefont
  {Glaetzle}, \citenamefont {van Bijnen}, \citenamefont {Zoller},\ and\
  \citenamefont {Lechner}}]{Glaetzle:2017aa}%
  \BibitemOpen
  \bibfield  {author} {\bibinfo {author} {\bibfnamefont {A.~W.}\ \bibnamefont
  {Glaetzle}}, \bibinfo {author} {\bibfnamefont {R.~M.~W.}\ \bibnamefont {van
  Bijnen}}, \bibinfo {author} {\bibfnamefont {P.}~\bibnamefont {Zoller}},\ and\
  \bibinfo {author} {\bibfnamefont {W.}~\bibnamefont {Lechner}},\ }\href
  {https://doi.org/10.1038/ncomms15813} {\bibfield  {journal} {\bibinfo
  {journal} {Nature Communications}\ }\textbf {\bibinfo {volume} {8}},\
  \bibinfo {pages} {15813 EP } (\bibinfo {year} {2017})}\BibitemShut {NoStop}%
\bibitem [{\citenamefont {Bernien}\ \emph {et~al.}(2017)\citenamefont {Bernien}
  \emph {et~al.}}]{Bernien:2017aa}%
  \BibitemOpen
  \bibfield  {author} {\bibinfo {author} {\bibfnamefont {H.}~\bibnamefont
  {Bernien}} \emph {et~al.},\ }\href {https://doi.org/10.1038/nature24622}
  {\bibfield  {journal} {\bibinfo  {journal} {Nature}\ }\textbf {\bibinfo
  {volume} {551}},\ \bibinfo {pages} {579} (\bibinfo {year}
  {2017})}\BibitemShut {NoStop}%
\bibitem [{\citenamefont {Scholl}\ \emph {et~al.}(2021)\citenamefont {Scholl}
  \emph {et~al.}}]{Scholl:2021aa}%
  \BibitemOpen
  \bibfield  {author} {\bibinfo {author} {\bibfnamefont {P.}~\bibnamefont
  {Scholl}} \emph {et~al.},\ }\href
  {https://doi.org/10.1038/s41586-021-03585-1} {\bibfield  {journal} {\bibinfo
  {journal} {Nature}\ }\textbf {\bibinfo {volume} {595}},\ \bibinfo {pages}
  {233} (\bibinfo {year} {2021})}\BibitemShut {NoStop}%
\bibitem [{\citenamefont {Kim}\ \emph {et~al.}(2010)\citenamefont {Kim} \emph
  {et~al.}}]{Kim:2010aa}%
  \BibitemOpen
  \bibfield  {author} {\bibinfo {author} {\bibfnamefont {K.}~\bibnamefont
  {Kim}} \emph {et~al.},\ }\href {https://doi.org/10.1038/nature09071}
  {\bibfield  {journal} {\bibinfo  {journal} {Nature}\ }\textbf {\bibinfo
  {volume} {465}},\ \bibinfo {pages} {590} (\bibinfo {year}
  {2010})}\BibitemShut {NoStop}%
\bibitem [{\citenamefont {Islam}\ \emph {et~al.}(2013)\citenamefont {Islam}
  \emph {et~al.}}]{Islam:2013mi}%
  \BibitemOpen
  \bibfield  {author} {\bibinfo {author} {\bibfnamefont {R.}~\bibnamefont
  {Islam}} \emph {et~al.},\ }\href
  {http://science.sciencemag.org/content/340/6132/583.abstract} {\bibfield
  {journal} {\bibinfo  {journal} {Science}\ }\textbf {\bibinfo {volume}
  {340}},\ \bibinfo {pages} {583} (\bibinfo {year} {2013})}\BibitemShut
  {NoStop}%
\bibitem [{\citenamefont {Monroe}\ \emph {et~al.}(2021)\citenamefont {Monroe}
  \emph {et~al.}}]{Monroe:2021aa}%
  \BibitemOpen
  \bibfield  {author} {\bibinfo {author} {\bibfnamefont {C.}~\bibnamefont
  {Monroe}} \emph {et~al.},\ }\href
  {https://doi.org/10.1103/RevModPhys.93.025001} {\bibfield  {journal}
  {\bibinfo  {journal} {Reviews of Modern Physics}\ }\textbf {\bibinfo {volume}
  {93}},\ \bibinfo {pages} {025001} (\bibinfo {year} {2021})}\BibitemShut
  {NoStop}%
\bibitem [{\citenamefont {Lamata}(2017)}]{Lamata:2017aa}%
  \BibitemOpen
  \bibfield  {author} {\bibinfo {author} {\bibfnamefont {L.}~\bibnamefont
  {Lamata}},\ }\href {https://doi.org/10.1038/srep43768} {\bibfield  {journal}
  {\bibinfo  {journal} {Scientific Reports}\ }\textbf {\bibinfo {volume} {7}},\
  \bibinfo {pages} {43768} (\bibinfo {year} {2017})}\BibitemShut {NoStop}%
\bibitem [{\citenamefont {Boixo}\ \emph {et~al.}(2018)\citenamefont {Boixo}
  \emph {et~al.}}]{Boixo2018}%
  \BibitemOpen
  \bibfield  {author} {\bibinfo {author} {\bibfnamefont {S.}~\bibnamefont
  {Boixo}} \emph {et~al.},\ }\href {https://doi.org/10.1038/s41567-018-0124-x}
  {\bibfield  {journal} {\bibinfo  {journal} {Nature Physics}\ }\textbf
  {\bibinfo {volume} {14}},\ \bibinfo {pages} {595} (\bibinfo {year}
  {2018})}\BibitemShut {NoStop}%
\bibitem [{\citenamefont {Hangleiter}\ \emph {et~al.}(2019)\citenamefont
  {Hangleiter}, \citenamefont {Kliesch}, \citenamefont {Schwarz},\ and\
  \citenamefont {Eisert}}]{Hangleiter2018}%
  \BibitemOpen
  \bibfield  {author} {\bibinfo {author} {\bibfnamefont {D.}~\bibnamefont
  {Hangleiter}}, \bibinfo {author} {\bibfnamefont {M.}~\bibnamefont {Kliesch}},
  \bibinfo {author} {\bibfnamefont {M.}~\bibnamefont {Schwarz}},\ and\ \bibinfo
  {author} {\bibfnamefont {J.}~\bibnamefont {Eisert}},\ }\href
  {https://doi.org/10.1103/PhysRevLett.122.210502} {\bibfield  {journal}
  {\bibinfo  {journal} {Physical Review Letters}\ }\textbf {\bibinfo {volume}
  {122}},\ \bibinfo {pages} {210502} (\bibinfo {year} {2019})}\BibitemShut
  {NoStop}%
\bibitem [{\citenamefont {Kadowaki}\ and\ \citenamefont
  {Nishimori}(1998)}]{kadowaki_quantum_1998}%
  \BibitemOpen
  \bibfield  {author} {\bibinfo {author} {\bibfnamefont {T.}~\bibnamefont
  {Kadowaki}}\ and\ \bibinfo {author} {\bibfnamefont {H.}~\bibnamefont
  {Nishimori}},\ }\href
  {http://journals.aps.org/pre/abstract/10.1103/PhysRevE.58.5355} {\bibfield
  {journal} {\bibinfo  {journal} {Phys. Rev. E}\ }\textbf {\bibinfo {volume}
  {58}},\ \bibinfo {pages} {5355} (\bibinfo {year} {1998})}\BibitemShut
  {NoStop}%
\bibitem [{\citenamefont {Crosson}\ and\ \citenamefont
  {Lidar}(2021)}]{crosson2020prospects}%
  \BibitemOpen
  \bibfield  {author} {\bibinfo {author} {\bibfnamefont {E.~J.}\ \bibnamefont
  {Crosson}}\ and\ \bibinfo {author} {\bibfnamefont {D.~A.}\ \bibnamefont
  {Lidar}},\ }\href {https://www.nature.com/articles/s42254-021-00313-6}
  {\bibfield  {journal} {\bibinfo  {journal} {Nature Reviews Physics}\ }\textbf
  {\bibinfo {volume} {3}},\ \bibinfo {pages} {466} (\bibinfo {year}
  {2021})}\BibitemShut {NoStop}%
\bibitem [{\citenamefont {Hangleiter}\ and\ \citenamefont
  {Eisert}(2023)}]{RevModSupremacy2023}%
  \BibitemOpen
  \bibfield  {author} {\bibinfo {author} {\bibfnamefont {D.}~\bibnamefont
  {Hangleiter}}\ and\ \bibinfo {author} {\bibfnamefont {J.}~\bibnamefont
  {Eisert}},\ }\href {https://doi.org/10.1103/RevModPhys.95.035001} {\bibfield
  {journal} {\bibinfo  {journal} {Reviews of Modern Physics}\ }\textbf
  {\bibinfo {volume} {95}},\ \bibinfo {pages} {035001} (\bibinfo {year}
  {2023})}\BibitemShut {NoStop}%
\bibitem [{\citenamefont {Reed}\ and\ \citenamefont
  {Simon}(1975)}]{ReedSimonII}%
  \BibitemOpen
  \bibfield  {author} {\bibinfo {author} {\bibfnamefont {M.}~\bibnamefont
  {Reed}}\ and\ \bibinfo {author} {\bibfnamefont {B.}~\bibnamefont {Simon}},\
  }\href {https://books.google.com/books?id=14XvAAAAMAAJ} {\emph {\bibinfo
  {title} {{Methods of Modern Mathematical Physics: Fourier analysis,
  self-adjointness}}}},\ Vol.~\bibinfo {volume} {2}\ (\bibinfo  {publisher}
  {Academic Press},\ \bibinfo {year} {1975})\BibitemShut {NoStop}%
\bibitem [{\citenamefont {Boothby}\ \emph {et~al.}(2020)\citenamefont
  {Boothby}, \citenamefont {King},\ and\ \citenamefont
  {Raymond}}]{Boothby2020Zephyr}%
  \BibitemOpen
  \bibfield  {author} {\bibinfo {author} {\bibfnamefont {T.}~\bibnamefont
  {Boothby}}, \bibinfo {author} {\bibfnamefont {A.~D.}\ \bibnamefont {King}},\
  and\ \bibinfo {author} {\bibfnamefont {J.}~\bibnamefont {Raymond}},\ }\href
  {https://arxiv.org/abs/2003.00133} {\bibfield  {journal} {\bibinfo  {journal}
  {arXiv preprint arXiv:2003.00133}\ } (\bibinfo {year} {2020})}\BibitemShut
  {NoStop}%
\bibitem [{\citenamefont {Choi}(2008)}]{Choi1}%
  \BibitemOpen
  \bibfield  {author} {\bibinfo {author} {\bibfnamefont {V.}~\bibnamefont
  {Choi}},\ }\href {https://doi.org/10.1007/s11128-008-0082-9} {\bibfield
  {journal} {\bibinfo  {journal} {Quant. Inf. Proc.}\ }\textbf {\bibinfo
  {volume} {7}},\ \bibinfo {pages} {193} (\bibinfo {year} {2008})}\BibitemShut
  {NoStop}%
\bibitem [{\citenamefont {Choi}(2011)}]{Choi2}%
  \BibitemOpen
  \bibfield  {author} {\bibinfo {author} {\bibfnamefont {V.}~\bibnamefont
  {Choi}},\ }\href
  {https://link.springer.com/article/10.1007%2Fs11128-010-0200-3} {\bibfield
  {journal} {\bibinfo  {journal} {{Quant. Inf. Proc.}}\ }\textbf {\bibinfo
  {volume} {10}},\ \bibinfo {pages} {343} (\bibinfo {year} {2011})}\BibitemShut
  {NoStop}%
\bibitem [{\citenamefont {Lechner}\ \emph {et~al.}(2015)\citenamefont
  {Lechner}, \citenamefont {Hauke},\ and\ \citenamefont
  {Zoller}}]{Lechner:2015}%
  \BibitemOpen
  \bibfield  {author} {\bibinfo {author} {\bibfnamefont {W.}~\bibnamefont
  {Lechner}}, \bibinfo {author} {\bibfnamefont {P.}~\bibnamefont {Hauke}},\
  and\ \bibinfo {author} {\bibfnamefont {P.}~\bibnamefont {Zoller}},\
  }\bibfield  {journal} {\bibinfo  {journal} {Science Advances}\ }\textbf
  {\bibinfo {volume} {1}},\ \href {https://doi.org/10.1126/sciadv.1500838}
  {10.1126/sciadv.1500838} (\bibinfo {year} {2015})\BibitemShut {NoStop}%
\bibitem [{\citenamefont {Albash}\ \emph {et~al.}(2016)\citenamefont {Albash},
  \citenamefont {Vinci},\ and\ \citenamefont {Lidar}}]{Albash:2016jk}%
  \BibitemOpen
  \bibfield  {author} {\bibinfo {author} {\bibfnamefont {T.}~\bibnamefont
  {Albash}}, \bibinfo {author} {\bibfnamefont {W.}~\bibnamefont {Vinci}},\ and\
  \bibinfo {author} {\bibfnamefont {D.~A.}\ \bibnamefont {Lidar}},\ }\href
  {http://link.aps.org/doi/10.1103/PhysRevA.94.022327} {\bibfield  {journal}
  {\bibinfo  {journal} {Physical Review A}\ }\textbf {\bibinfo {volume} {94}},\
  \bibinfo {pages} {022327} (\bibinfo {year} {2016})}\BibitemShut {NoStop}%
\bibitem [{\citenamefont {Ghosh}\ \emph {et~al.}(2025)\citenamefont {Ghosh},
  \citenamefont {Hangleiter},\ and\ \citenamefont {Helsen}}]{Ghosh2025PRXQ}%
  \BibitemOpen
  \bibfield  {author} {\bibinfo {author} {\bibfnamefont {S.}~\bibnamefont
  {Ghosh}}, \bibinfo {author} {\bibfnamefont {D.}~\bibnamefont {Hangleiter}},\
  and\ \bibinfo {author} {\bibfnamefont {J.}~\bibnamefont {Helsen}},\ }\href
  {https://doi.org/10.1103/52xz-3hpc} {\bibfield  {journal} {\bibinfo
  {journal} {PRX Quantum}\ }\textbf {\bibinfo {volume} {6}},\ \bibinfo {pages}
  {040344} (\bibinfo {year} {2025})}\BibitemShut {NoStop}%
\bibitem [{\citenamefont {Tutte}(1952)}]{Tutte:1952aa}%
  \BibitemOpen
  \bibfield  {author} {\bibinfo {author} {\bibfnamefont {W.~T.}\ \bibnamefont
  {Tutte}},\ }\href {https://doi.org/DOI: 10.4153/CJM-1952-028-2} {\bibfield
  {journal} {\bibinfo  {journal} {Canadian Journal of Mathematics}\ }\textbf
  {\bibinfo {volume} {4}},\ \bibinfo {pages} {314} (\bibinfo {year}
  {1952})}\BibitemShut {NoStop}%
\bibitem [{\citenamefont {Lov{\'a}sz}\ and\ \citenamefont
  {Plummer}(1986)}]{LovaszPlummer}%
  \BibitemOpen
  \bibfield  {author} {\bibinfo {author} {\bibfnamefont {L.}~\bibnamefont
  {Lov{\'a}sz}}\ and\ \bibinfo {author} {\bibfnamefont {M.~D.}\ \bibnamefont
  {Plummer}},\ }\href@noop {} {\emph {\bibinfo {title} {Matching Theory}}},\
  Annals of Discrete Mathematics 29\ (\bibinfo  {publisher} {North-Holland},\
  \bibinfo {address} {Amsterdam},\ \bibinfo {year} {1986})\BibitemShut
  {NoStop}%
\bibitem [{\citenamefont {Linke}\ \emph {et~al.}(2017)\citenamefont {Linke}
  \emph {et~al.}}]{linkeExperimentalComparisonTwo2017}%
  \BibitemOpen
  \bibfield  {author} {\bibinfo {author} {\bibfnamefont {N.~M.}\ \bibnamefont
  {Linke}} \emph {et~al.},\ }\href
  {https://www.pnas.org/doi/full/10.1073/pnas.1618020114} {\bibfield  {journal}
  {\bibinfo  {journal} {Proc. Natl. Acad. Sci.}\ }\textbf {\bibinfo {volume}
  {114}},\ \bibinfo {pages} {3305} (\bibinfo {year} {2017})}\BibitemShut
  {NoStop}%
\bibitem [{\citenamefont {Parra-Rodriguez}\ \emph {et~al.}(2020)\citenamefont
  {Parra-Rodriguez}, \citenamefont {Lougovski}, \citenamefont {Lamata},
  \citenamefont {Solano},\ and\ \citenamefont {Sanz}}]{Parra-Rodriguez:2020aa}%
  \BibitemOpen
  \bibfield  {author} {\bibinfo {author} {\bibfnamefont {A.}~\bibnamefont
  {Parra-Rodriguez}}, \bibinfo {author} {\bibfnamefont {P.}~\bibnamefont
  {Lougovski}}, \bibinfo {author} {\bibfnamefont {L.}~\bibnamefont {Lamata}},
  \bibinfo {author} {\bibfnamefont {E.}~\bibnamefont {Solano}},\ and\ \bibinfo
  {author} {\bibfnamefont {M.}~\bibnamefont {Sanz}},\ }\href
  {https://doi.org/10.1103/PhysRevA.101.022305} {\bibfield  {journal} {\bibinfo
   {journal} {Physical Review A}\ }\textbf {\bibinfo {volume} {101}},\ \bibinfo
  {pages} {022305} (\bibinfo {year} {2020})}\BibitemShut {NoStop}%
\bibitem [{\citenamefont {{A.Yu. Kitaev, A.H. Shen, M.N.
  Vyalyi}}(2000)}]{Kitaev:book}%
  \BibitemOpen
  \bibfield  {author} {\bibinfo {author} {\bibnamefont {{A.Yu. Kitaev, A.H.
  Shen, M.N. Vyalyi}}},\ }\href
  {https://bookstore.ams.org/view?ProductCode=GSM%2F47} {\emph {\bibinfo
  {title} {Classical and Quantum Computation}}},\ \bibinfo {series} {Graduate
  Studies in Mathematics}, Vol.~\bibinfo {volume} {47}\ (\bibinfo  {publisher}
  {American Mathematical Society},\ \bibinfo {address} {Providence, RI},\
  \bibinfo {year} {2000})\BibitemShut {NoStop}%
\bibitem [{\citenamefont {Aharonov}\ \emph {et~al.}(2007)\citenamefont
  {Aharonov}, \citenamefont {van Dam}, \citenamefont {Kempe}, \citenamefont
  {Landau}, \citenamefont {Lloyd},\ and\ \citenamefont
  {Regev}}]{aharonov_adiabatic_2007}%
  \BibitemOpen
  \bibfield  {author} {\bibinfo {author} {\bibfnamefont {D.}~\bibnamefont
  {Aharonov}}, \bibinfo {author} {\bibfnamefont {W.}~\bibnamefont {van Dam}},
  \bibinfo {author} {\bibfnamefont {J.}~\bibnamefont {Kempe}}, \bibinfo
  {author} {\bibfnamefont {Z.}~\bibnamefont {Landau}}, \bibinfo {author}
  {\bibfnamefont {S.}~\bibnamefont {Lloyd}},\ and\ \bibinfo {author}
  {\bibfnamefont {O.}~\bibnamefont {Regev}},\ }\href
  {http://epubs.siam.org/doi/abs/10.1137/S0097539705447323} {\bibfield
  {journal} {\bibinfo  {journal} {SIAM Journal on Computing}\ }\textbf
  {\bibinfo {volume} {37}},\ \bibinfo {pages} {166} (\bibinfo {year}
  {2007})}\BibitemShut {NoStop}%
\bibitem [{\citenamefont {Fujii}(2018)}]{fujii2018quantum}%
  \BibitemOpen
  \bibfield  {author} {\bibinfo {author} {\bibfnamefont {K.}~\bibnamefont
  {Fujii}},\ }\href {https://arxiv.org/abs/1803.09954} {} (\bibinfo {year}
  {2018}),\ \Eprint {https://arxiv.org/abs/1803.09954} {arXiv:1803.09954
  [quant-ph]} \BibitemShut {NoStop}%
\bibitem [{\citenamefont {Kempe}\ \emph {et~al.}(2006)\citenamefont {Kempe},
  \citenamefont {Kitaev},\ and\ \citenamefont {Regev}}]{kempe:1070}%
  \BibitemOpen
  \bibfield  {author} {\bibinfo {author} {\bibfnamefont {J.}~\bibnamefont
  {Kempe}}, \bibinfo {author} {\bibfnamefont {A.}~\bibnamefont {Kitaev}},\ and\
  \bibinfo {author} {\bibfnamefont {O.}~\bibnamefont {Regev}},\ }\href
  {https://doi.org/10.1137/S0097539704445226} {\bibfield  {journal} {\bibinfo
  {journal} {SIAM Journal on Computing}\ }\textbf {\bibinfo {volume} {35}},\
  \bibinfo {pages} {1070} (\bibinfo {year} {2006})}\BibitemShut {NoStop}%
\bibitem [{\citenamefont {Fuchs}\ and\ \citenamefont {van~de
  Graaf}(1999)}]{Fuchs:99}%
  \BibitemOpen
  \bibfield  {author} {\bibinfo {author} {\bibfnamefont {C.~A.}\ \bibnamefont
  {Fuchs}}\ and\ \bibinfo {author} {\bibfnamefont {J.}~\bibnamefont {van~de
  Graaf}},\ }\bibfield  {booktitle} {\emph {\bibinfo {booktitle} {IEEE
  Transactions on Information Theory}},\ }\href
  {https://doi.org/10.1109/18.761271} {\bibfield  {journal} {\bibinfo
  {journal} {IEEE Transactions on Information Theory}\ }\textbf {\bibinfo
  {volume} {45}},\ \bibinfo {pages} {1216} (\bibinfo {year}
  {1999})}\BibitemShut {NoStop}%
\bibitem [{\citenamefont {Watrous}(2018)}]{Watrous:book}%
  \BibitemOpen
  \bibfield  {author} {\bibinfo {author} {\bibfnamefont {J.}~\bibnamefont
  {Watrous}},\ }\href {https://doi.org/10.1017/9781316848142} {\emph {\bibinfo
  {title} {{The Theory of Quantum Information}}}}\ (\bibinfo  {publisher}
  {{Cambridge University Press}},\ \bibinfo {year} {2018})\BibitemShut
  {NoStop}%
\bibitem [{\citenamefont {Markov}\ and\ \citenamefont
  {Shi}(2008)}]{MarkovShi2008}%
  \BibitemOpen
  \bibfield  {author} {\bibinfo {author} {\bibfnamefont {I.~L.}\ \bibnamefont
  {Markov}}\ and\ \bibinfo {author} {\bibfnamefont {Y.}~\bibnamefont {Shi}},\
  }\href {https://doi.org/10.1137/050644756} {\bibfield  {journal} {\bibinfo
  {journal} {SIAM Journal on Computing}\ }\textbf {\bibinfo {volume} {38}},\
  \bibinfo {pages} {963} (\bibinfo {year} {2008})}\BibitemShut {NoStop}%
\bibitem [{\citenamefont {Van~den Nest}\ \emph {et~al.}(2007)\citenamefont
  {Van~den Nest}, \citenamefont {D{\"u}r}, \citenamefont {Vidal},\ and\
  \citenamefont {Briegel}}]{Van-den-Nest:2007aa}%
  \BibitemOpen
  \bibfield  {author} {\bibinfo {author} {\bibfnamefont {M.}~\bibnamefont
  {Van~den Nest}}, \bibinfo {author} {\bibfnamefont {W.}~\bibnamefont
  {D{\"u}r}}, \bibinfo {author} {\bibfnamefont {G.}~\bibnamefont {Vidal}},\
  and\ \bibinfo {author} {\bibfnamefont {H.~J.}\ \bibnamefont {Briegel}},\
  }\href {https://doi.org/10.1103/PhysRevA.75.012337} {\bibfield  {journal}
  {\bibinfo  {journal} {Physical Review A}\ }\textbf {\bibinfo {volume} {75}},\
  \bibinfo {pages} {012337} (\bibinfo {year} {2007})}\BibitemShut {NoStop}%
\bibitem [{\citenamefont {Ghosh}\ \emph {et~al.}(2023)\citenamefont {Ghosh},
  \citenamefont {Deshpande}, \citenamefont {Hangleiter}, \citenamefont
  {Gorshkov},\ and\ \citenamefont {Fefferman}}]{Ghosh:2023aa}%
  \BibitemOpen
  \bibfield  {author} {\bibinfo {author} {\bibfnamefont {S.}~\bibnamefont
  {Ghosh}}, \bibinfo {author} {\bibfnamefont {A.}~\bibnamefont {Deshpande}},
  \bibinfo {author} {\bibfnamefont {D.}~\bibnamefont {Hangleiter}}, \bibinfo
  {author} {\bibfnamefont {A.~V.}\ \bibnamefont {Gorshkov}},\ and\ \bibinfo
  {author} {\bibfnamefont {B.}~\bibnamefont {Fefferman}},\ }\href
  {https://doi.org/10.1103/PhysRevLett.131.030601} {\bibfield  {journal}
  {\bibinfo  {journal} {Physical Review Letters}\ }\textbf {\bibinfo {volume}
  {131}},\ \bibinfo {pages} {030601} (\bibinfo {year} {2023})}\BibitemShut
  {NoStop}%
\bibitem [{\citenamefont {Hein}\ \emph {et~al.}(2004)\citenamefont {Hein},
  \citenamefont {Eisert},\ and\ \citenamefont {Briegel}}]{Hein:2004aa}%
  \BibitemOpen
  \bibfield  {author} {\bibinfo {author} {\bibfnamefont {M.}~\bibnamefont
  {Hein}}, \bibinfo {author} {\bibfnamefont {J.}~\bibnamefont {Eisert}},\ and\
  \bibinfo {author} {\bibfnamefont {H.~J.}\ \bibnamefont {Briegel}},\ }\href
  {https://doi.org/10.1103/PhysRevA.69.062311} {\bibfield  {journal} {\bibinfo
  {journal} {Physical Review A}\ }\textbf {\bibinfo {volume} {69}},\ \bibinfo
  {pages} {062311} (\bibinfo {year} {2004})}\BibitemShut {NoStop}%
\bibitem [{\citenamefont {Vershynin}(2018)}]{Vershynin2018HighDimProb}%
  \BibitemOpen
  \bibfield  {author} {\bibinfo {author} {\bibfnamefont {R.}~\bibnamefont
  {Vershynin}},\ }\href
  {https://www.math.uci.edu/~rvershyn/papers/HDP-book/HDP-book.html} {\emph
  {\bibinfo {title} {High-Dimensional Probability: An Introduction with
  Applications in Data Science}}},\ Cambridge Series in Statistical and
  Probabilistic Mathematics\ (\bibinfo  {publisher} {Cambridge University
  Press},\ \bibinfo {year} {2018})\BibitemShut {NoStop}%
\bibitem [{\citenamefont {Goldberg}\ and\ \citenamefont
  {Guo}(2017)}]{GoldbergGuo2017ComplexIsing}%
  \BibitemOpen
  \bibfield  {author} {\bibinfo {author} {\bibfnamefont {L.~A.}\ \bibnamefont
  {Goldberg}}\ and\ \bibinfo {author} {\bibfnamefont {H.}~\bibnamefont {Guo}},\
  }\href {https://doi.org/10.1007/s00037-017-0162-2} {\bibfield  {journal}
  {\bibinfo  {journal} {computational complexity}\ }\textbf {\bibinfo {volume}
  {26}},\ \bibinfo {pages} {765} (\bibinfo {year} {2017})}\BibitemShut
  {NoStop}%
\bibitem [{\citenamefont {Galanis}\ \emph {et~al.}(2022)\citenamefont
  {Galanis}, \citenamefont {{\v{S}}tefankovi{\v{c}}},\ and\ \citenamefont
  {Vigoda}}]{Galanis2022ComplexIsingBoundedDegree}%
  \BibitemOpen
  \bibfield  {author} {\bibinfo {author} {\bibfnamefont {A.}~\bibnamefont
  {Galanis}}, \bibinfo {author} {\bibfnamefont {D.}~\bibnamefont
  {{\v{S}}tefankovi{\v{c}}}},\ and\ \bibinfo {author} {\bibfnamefont
  {E.}~\bibnamefont {Vigoda}},\ }\href {https://doi.org/10.1137/21M1454043}
  {\bibfield  {journal} {\bibinfo  {journal} {SIAM Journal on Computing}\
  }\textbf {\bibinfo {volume} {51}},\ \bibinfo {pages} {STOC21} (\bibinfo
  {year} {2022})}\BibitemShut {NoStop}%
\end{thebibliography}

%

\begin{center}
    \textbf{End Matter}
\end{center}

\textit{Related work}.---%
Our work is related in spirit to ``digital-analog quantum computation'' (DAQC)~\cite{Parra-Rodriguez:2020aa}, where homogeneous Ising interactions serve as an analog resource that, when combined with single-qubit rotations, enable universality and efficient synthesis of inhomogeneous multi-body dynamics. In contrast, we focus on a minimal digital-analog-digital construction aimed at a complexity-theoretic sampling result for sparse, hardware-faithful architectures rather than on universality and compilation.

Also related is an abstract scheme that adiabatically prepares specially engineered Feynman-Kitaev history-state (clock) ground states~\cite{Kitaev:book,aharonov_adiabatic_2007}) and then measures them to achieve IQP-type quantum supremacy~\cite{fujii2018quantum}. In contrast, our diabatic scheme is implementable in real devices; it avoids clock registers and the non-native 
$k$-local interactions that typically require gadget reductions and additional embedding overhead~\cite{kempe:1070}.

\textit{Derivation of \cref{eq:Vapprox}}.---%
We write $\tilde \HI(t)=F(\alpha(t))$, where $F(\theta)\equiv \RX(-\theta) \HI \RX(\theta)$. Then
\beq
\label{eq:F'}
F'(\theta) = i[\HX,F(\theta)] 
= i \RX(-\t) [\HX,\HI] \RX(\t) .
\eeq
The fundamental theorem of calculus gives
\beq
\Delta\tilde \HI \equiv \tilde \HI(T)-\tilde \HI(t)
=\int_{\alpha(t)}^{\alpha(T)} F'(\theta) d\theta .
\eeq
Substituting \cref{eq:F'}, taking operator norms, and using unitary invariance,
\beq
\|\Delta\tilde \HI\|
\le \int_{\alpha(t)}^{\alpha(T)} \|[\HX,\HI]\| d\theta
= K |\alpha(T)-\alpha(t)|,
\eeq

In our setting $A(\cdot)\ge 0$, so $\alpha(T)\ge\alpha(t)$, which yields, using \cref{eq:early-late}:
\beq
\label{eq:HIt-T} 
\|\tilde \HI(T)-\tilde \HI(t) \| \le K \D\a \ , \quad t\in\mI_2
\eeq
where we also used that for $t\in\mI_2=[T-\delta,T]$ one has $[t,T]\subseteq\mI_2$.

Next, we need the following standard result (Duhamel's inequality)~\cite{ReedSimonII}:
for bounded Hamiltonians $H_1(t)$ and $H_2(t)$ and $t\in[0,T]$, let $U_k(t,s)$, $k\in\{1,2\}$, denote the unitary propagators solving
$i \partial_t U_k(t,s)=H_k(t) U_k(t,s)$ with the initial condition $U_k(s,s)=I$.
Then, for all $0\le s\le t\le T$, $\bigl\|U_1(t,s)-U_2(t,s)\bigr\|  \le  \int_s^t \bigl\|H_1(\tau)-H_2(\tau)\bigr\| d\tau$.
In particular, writing $U_k(t)\equiv U_k(t,0)$,
\beq
\label{eq:Duhamel}
\bigl\|U_1(T)-U_2(T)\bigr\|  \le  \int_0^T \bigl\|H_1(t)-H_2(t)\bigr\| dt.
\eeq
Now split the integral at $T-\delta$
and write $\tU(T)=\tU_2\tU_1$ with 
$\tU_1 = \mT e^{-i \int_{\mI_1} B(t)\tHI(t)dt}$ 
and 
$\tU_2 = \mT e^{-i \int_{\mI_2} B(t)\tHI(t)dt}$. Then
$\tU(T)-e^{-i\beta\tHI(T)}
=(\tU_2-e^{-i\beta\tHI(T)})\tU_1 + e^{-i\beta\tHI(T)}(\tU_1-I)$.
Taking norms, using the triangle inequality and $\|U\|=1$ for unitaries, we obtain
\begin{align}
\label{eq:split}
\|\tU(T)-e^{-i\beta\tHI(T)}\|
&\le \|\tU_2-e^{-i\beta\tHI(T)}\|
   + \|\tU_1-I\| .
\end{align}
For the first interval, using \cref{eq:Duhamel}, $B(t)\ge 0$, and unitary invariance of the operator norm:
\beq
\label{eq:early-bound}
\left\|\tU_1 - I\right\| \le \int_{\mI_1} B(t)\norm{\tHI(t)} dt \le
\eta\norm{\HI} .
\eeq
Using \cref{eq:Duhamel} again:
\bes
\label{eq:late-bound}
\begin{align}
\left\|\tU_2 - e^{-i\beta\tHI(T)}\right\| &\le \int_{\mI_2} B(t)\norm{\tHI(t)-\tHI(T)} dt \\
  &\le \beta K\Delta\a ,
\end{align}
\ees
where in the last inequality we used \cref{eq:HIt-T}. Combining \cref{eq:split,eq:early-bound,eq:late-bound} yields \cref{eq:Vapprox}.\\

\textit{Proof of \cref{lem:main}}.---%
Using \cref{eq:full,eq:interaction} and $\UL=S^\dagger W^{\otimes n}$ we have
\beq
\UpDAD = U_{\mathrm{R}} U_{\mathrm{A}}(T) \UL
= U_{\mathrm{R}} S\tU(T) S^\dagger W^{\otimes n}.
\eeq
Thus, with $\UpIQP=U_{\mathrm{R}} e^{-i H_Z} W^{\otimes n}$ and $H_Z=\beta \HI$,
\bes
\label{eq:U'DAD-Uid}
\begin{align}
&\|\UpDAD - \UpIQP\|
= \big\| U_{\mathrm{R}} \big(S\tU(T)S^\dagger - e^{-i\beta\HI}\big) W^{\otimes n} \big\| \notag \\
\label{eq:U'DAD-Uid-2}
&\quad = \|S\tU(T)S^\dagger - e^{-i\beta\HI}\| \\
\label{eq:U'DAD-Uid-3}
&\quad = \|\tU(T) - e^{-i\beta\tHI(T)}\|
\le \eta\|\HI\|+K\beta\D\a,
\end{align}
\ees
where we used unitary invariance of the operator norm, and then \cref{eq:Vapprox} together with $\tHI(T)=S^\dagger \HI S$. For pure states $\ket{\psi_k}=U_k\ket{0}^{\otimes n}$, the trace-norm distance satisfies $D_{\mathrm{tr}}(\psi_1,\psi_2)\le \|\ket{\psi_1}-\ket{\psi_2}\| \le \|U_1-U_2\|$~\cite{Fuchs:99}, and measurement cannot increase  $D_{\mathrm{tr}}$ (data processing inequality)~\cite{Watrous:book}:
\beq
\label{eq:data-proc}
D(P_{U_1},P_{U_2}) \le D_{\mathrm{tr}}(\psi_1,\psi_2)\le \|U_1-U_2\|.
\eeq
Applying this to $U_1=\UpDAD$ and $U_2=\UpIQP$ yields \cref{eq:TV}. This concludes the proof.\\

\textit{Proof of \cref{thm:sup}}.---%
Fix $0<\varepsilon<\varepsilon_\star$, where $\varepsilon_\star$ is a constant below the anticoncentration-dependent threshold appearing in the constant-error IQP reductions (see, e.g.,~\cite{Bremner:2016aa,Hangleiter2018}).  
Choose smooth, monotone schedules $A(t)$ and $B(t)$ so that
$\eta\|\HI\|+K \beta \Delta\alpha  \le  \varepsilon/4$; 
this is achievable as shown by the example below.
Set $\UL=S^\dagger W^{\otimes n}$ and $\UR=W^{\otimes n}$; choose parameter values so that \cref{eq:angles} holds. Then $\UpIQP=\UR e^{-iH_Z}W^{\otimes n}$ coincides with $\UIQP$ and $\UpDAD=\UDAD$.
By \cref{lem:main},
\begin{equation}
\label{eq:TV-key}
D \bigl(\PDAD,  \PIQP \bigr) \le \eta\|\HI\|+K \beta \Delta\alpha \le \varepsilon/4 .
\end{equation}
Now allow static miscalibration of the scaled angles, $h'_i\mapsto h'_i+\Delta h'_i$ and $J'_{ij}\mapsto J'_{ij}+\Delta J'_{ij}$. As shown in End Matter, if 
$|\Delta h'_i|,|\Delta J'_{ij}|\le \varepsilon/\bigl(4(n+m)\bigr)$ then the resulting perturbation $\Delta H'_I$ of the diagonal Hamiltonian satisfies $\|e^{-i(H'_I+\Delta H'_I)}-e^{-iH'_I}\|\le\varepsilon/4$. This contributes at most $\varepsilon/4$ to the operator-norm bound used in the proof of \cref{lem:main}, and hence to the TV bound in \cref{eq:TV-key}. Thus
$D(\PDAD,\PIQP)\le \varepsilon/2$
for the implemented instance.
Now suppose there exists a polynomial-time classical sampler whose output distribution $P_{\mathrm{cl}}$ obeys $D (P_{\mathrm{cl}},\PDAD)<\varepsilon/2$ on the same instance. By the triangle inequality,
\beq
D\bigl(P_{\mathrm{cl}},\PIQP\bigr)
\le
D\bigl(P_{\mathrm{cl}},\PDAD\bigr)
+ D\bigl(\PDAD,\PIQP\bigr)
< \varepsilon .
\eeq
Since $\varepsilon<\varepsilon_\star$, the constant-error IQP sampling hardness then implies a collapse of $\mathrm{PH}$, a contradiction. \\

\textit{Effect of parameter errors; derivation of \cref{eq:TV-key}}.---%
Suppose the (scaled) target parameters $h'_i=\b h_i$ and $J'_{ij} = \b J_{ij}$ in
$\HI'\equiv \b\HI=\sum_{i\in\mV} h'_i Z_i + \sum_{(i,j)\in\mE} J'_{ij} Z_i Z_j$
are implemented with errors
$|\Delta h'_i|, |\Delta J'_{ij}|\le \delta_{\mathrm{par}}$ (here $\mV$ and $\mE$ denote the vertex and edge set of the relevant graph, e.g., $\mH_n$ or $\mG$).
Let 
$\b\D\HI = \D \HI'  =  \sum_{i\in\mV} \Delta h'_i  Z_i  +  \sum_{(i,j)\in\mE} \Delta J'_{ij}  Z_i Z_j$
be the resulting perturbation, 
and let $n=|\mV|$ and $m=|\mE|$. Then
\beq
\|\Delta \HI'\|\le \sum_{i\in\mc V}|\Delta h'_i| + \sum_{(i,j)\in\mc E}|\Delta J'_{ij}|
\le (n+m) \delta_{\mathrm{par}}.
\eeq
If the maximum degree is $d$,
then $m\le d n/2$ and the bound can be written as
$\|\Delta \HI'\|\le \bigl(1+d/2\bigr)n\delta_{\mathrm{par}}$.
Moreover, the difference between the intended and implemented unitaries is
\beq
\begin{aligned}
&\|e^{-i(\HI'+\D \HI' )}-e^{-i\HI'}\|
=\| \int_0^1 \frac{d}{ds}  e^{-i(\HI'+s\D \HI' )}  ds \|\\
&\ \ =\| \int_0^1 \bigl(-i e^{-i(\HI'+s\D \HI' )} \D \HI' \bigr)  ds \|
\le \|\D \HI' \| .
\end{aligned}
\eeq
Thus, parameter miscalibration contributes at most $\bigl(1+d/2\bigr)n\delta_{\mathrm{par}}$ additively to the operator-norm bound in \cref{eq:U'DAD-Uid-2}, and hence also to the TV bound in \cref{lem:main}.
Choosing, e.g., $\delta_{\mathrm{par}}\le \varepsilon/\bigl(4( 1+d/2)n\bigr)$ ensures that the added contribution is $\le \varepsilon/4$.

With $|h_i|\le h_{\max}$ and $|J_{ij}|\le J_{\max}$ on a graph with $n$ vertices
and $m$ edges, $\|\HI\| \le n h_{\max}+ m J_{\max}$ and $K=\|[\HX,\HI]\| \le 2n h_{\max}+ 4 m J_{\max}$.
Hence the TV bound $\eta\|\HI\| + K\beta\Delta\alpha$ scales with $n+m$.
For bounded degree $d=O(1)$, $m=O(n)$ and one can take $\mu=\Theta(1/n)$; for
dense graphs (the subject of \cref{prop:Kn-anticonc}) we have $m=\Theta(n^2)$, so $\mu=\Theta(1/n^2)$, which is more experimentally demanding.\\

\textit{Derivation of \cref{eq:eta-Delalpha-ineq}}.---%
On the early interval $\mI_1=[0,T-\delta]$ we have $0\le s_0\le \tilde{s}(t)\le \frac12$ and $\kappa\le 1$, so
\beq
s(t)=\frac{\tilde{s}(t)-s_0}{\kappa} \le \frac{\tilde{s}(t)}{\kappa}.
\eeq
Hence
\beq
\int_0^{T-\delta} B(t) dt
= B_0 \int_0^{T-\delta} s(t) dt
\le \frac{B_0}{\kappa} \int_0^{T-\delta} \tilde{s}(t) dt.
\eeq
A direct integration of the unnormalized sigmoid $\tilde{s}(t)$ gives, with $x\equiv (T-\delta)/\mu$,
\beq
\int_0^{T-\delta} \tilde{s}(t) dt
= \frac{\mu}{2}\left[x-\ln\cosh x\right],
\eeq
so
\beq
\eta \equiv \int_0^{T-\delta} B(t) dt
\le \frac{B_0\mu}{2\kappa} \left[x-\ln\cosh x\right].
\eeq
Using $\cosh u\ge e^{u}/2$ for $u\ge 0$ implies $x-\ln\cosh x\le \ln 2$, hence
\beq
\eta \le \frac{\ln 2}{2} \frac{B_0\mu}{\kappa}.
\eeq

For the late interval $\mI_2=[T-\delta,T]$ we similarly have
\beq
\Delta\alpha \equiv \int_{T-\delta}^{T} A(t) dt
= A_0 \int_{T-\delta}^{T} [1-s(t)] dt.
\eeq
Using $s(t)=(\tilde{s}(t)-s_0)/\kappa$ we find
\beq
1-s(t) = \frac{\kappa+s_0-\tilde{s}(t)}{\kappa},
\eeq
so
\beq
\Delta\alpha
= \frac{A_0}{\kappa}\left[(\kappa+s_0)\delta - \int_{T-\delta}^{T} \tilde{s}(t) dt\right].
\eeq
A direct integration of $\tilde{s}(t)$ on $\mI_2$ yields, with $y\equiv \delta/\mu$,
\beq
\int_{T-\delta}^{T} \tilde{s}(t) dt
= \frac{\delta}{2} + \frac{\mu}{2}\ln\cosh y.
\eeq
Noting that $\kappa+s_0 = \frac12(1+\tanh y)$ [since $\tilde{s}(T)=\frac12(1+\tanh y)$], we obtain
\beq
\Delta\alpha
= \frac{A_0\mu}{2\kappa}\left[y\tanh y - \ln\cosh y\right].
\eeq
Define $f(y)\equiv y\tanh y - \ln\cosh y$ for $y\ge 0$. Then $f(0)=0$, $f'(y)=y \sech^2 y\ge 0$, so $f$ is increasing, and
$\lim_{y\to\infty} f(y) = \ln 2$.
Thus $0\le f(y)\le \ln 2$ for all $y\ge 0$, and we obtain
\beq
\Delta\alpha \le \frac{\ln 2}{2} \frac{A_0\mu}{\kappa}.
\eeq

Moreover, using $x\equiv \frac{T-\delta}{\mu}$ and $y\equiv \frac{\delta}{\mu}$ again:
\beq
\kappa \equiv \tilde{s}(T)-\tilde{s}(0)
= \frac{1}{2}\bigl[\tanh x + \tanh y\bigr],
\eeq
so for $x,y\gg 1$,
\bes
\begin{align}
 \kappa &= 1 - e^{-2x} - e^{-2y} + O\bigl(e^{-4\min\{x,y\}}\bigr)
\\
&\ge 1 - 2e^{-2\min\{T-\delta,\delta\}/\mu},
\end{align}
\ees
as claimed in the main text.\\

\textit{Motivation for $d\ge 3$ in \cref{def:QPU-R-E}}.---%
The condition $d\ge 3$ is needed since otherwise graph-state sampling with product measurements (which we use below) is efficiently simulable: $2$-regular graphs have treewidth $2$, independent of $n$; graph states on any graph with bounded treewidth can be efficiently classically simulated via tensor-network contraction~\cite[Theorem~1.6]{MarkovShi2008} (see also Refs.~\cite{Van-den-Nest:2007aa,Ghosh:2023aa}).\\

\textit{Motivation for choosing $\pi/4$ in \cref{eq:HZ'}}.---%
Note that 
\begin{align}
\prod_{(i,j)\in \mE_{\mG}} e^{-i\frac{\pi}{4} Z_i Z_j}
&= e^{i\frac{\pi}{4}|\mE_{\mG}|} \Bigl(\prod_{i} e^{-i\frac{\pi}{4}\deg_{\mG}(i) Z_i}\Bigr)\notag \\
&\quad\times 
\prod_{(i,j)\in \mE_{\mG}} \mathrm{CZ}_{ij},
\end{align}
where the prefactor $e^{i\frac{\pi}{4}|\mE_{\mG}|}$ is a global phase, and
\beq
\mathrm{CZ}_{ij}
= e^{-i\pi/4}  e^{i\frac{\pi}{4} Z_i}  e^{i\frac{\pi}{4} Z_j}  e^{-i\frac{\pi}{4} Z_i Z_j}.
\eeq
I.e., setting $\beta J_{ij}=\pi/4$ implements the graph-state entangler $\prod\mathrm{CZ}_{ij}$ modulo one-qubit $Z$ rotations, which we can absorb into $H'_Z$ [shift $v_i\mapsto v_i-\frac{\pi}{4}\deg_{\mG}(i)$]. Recall that a graph state is prepared by applying $\mathrm{CZ}_{ij}$ on all edges to $\ket{+}^{\otimes n}$~\cite{Hein:2004aa}.\\

\textit{Proof of \cref{prop:Kn-anticonc}}.---%
Let $\{\mH_n\}$ be a family of hardware graphs as in \cref{def:QPU-R-E}.
Draw $\mG\sim \mathrm{Unif}[\mF_d(\mH_n)]$ and $\theta\sim \mathrm{Unif}[0,2\pi)^n$.
For each such graph $\mG$ and angles $\t$, let $p_{\mG,\theta}(s)$ denote the output bitstring distribution of the corresponding graph state, i.e.,
for any bitstring $s\in\{0,1\}^n$
 \beq
 p_{\mG,\theta}(s) = \PIQP^{(\theta)}(s)
 = \bigl|\bra{s}\UIQP^{(\theta)}\ket{0^{\otimes n}}\bigr|^2 .
 \eeq
 Recall that 
 $\UIQP^{(\theta)}=  U_{\mathrm{R}}^{(\theta)} e^{-i H'_Z}  W^{\otimes n}$, where $U_{\mathrm{R}}^{(\theta)} = \bigotimes_{i=1}^n \Bigl(W R_Z(\theta_i)\Bigr)$ [\cref{eq:RZ-UR}], and where $H'_Z=\sum_{i\in\mV_\mG} v_i Z_i + \frac{\pi}{4}\sum_{(i,j)\in \mE_{\mG}} Z_i Z_j$,
 with arbitrary but fixed single-qubit angles $\{v_i\}$ [\cref{eq:IQP-new}].

 We proceed in two main steps, one for the first moment of $p_{\mG,\theta}(s)$, the other for the second moment, which we specialize to complete graphs.

 \begin{mylemma}
 \label{lem:first-moment}
 The first moment satisfies
 \begin{equation}
 \label{eq:bip-first-moment}
 \mathbb{E}_{\mG,\theta}[p_{\mG,\theta}(s)] = 2^{-n}.
 \end{equation}
 \end{mylemma}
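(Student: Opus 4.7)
\textit{Proof plan for \cref{lem:first-moment}.}
The plan is to isolate the $\theta$-dependence into the measurement POVM and exploit a single-qubit $Z$-twirl. Concretely, define the (pure) state
\beq
\ket{\psi_\mG} \equiv e^{-i H'_Z}  W^{\otimes n} \ket{0^{\otimes n}}
 = e^{-i H'_Z} \ket{+}^{\otimes n},
\eeq
which depends on $\mG$ (through the edge sum in $H'_Z$) and on the fixed angles $\{v_i\}$, but not on $\theta$. Then
\beq
p_{\mG,\theta}(s)
= \bra{\psi_\mG}  M_s^{(\theta)}  \ket{\psi_\mG},\quad
M_s^{(\theta)} \equiv U_{\mathrm{R}}^{(\theta)\dagger} \ketbra{s} U_{\mathrm{R}}^{(\theta)}.
\eeq
Since $U_{\mathrm{R}}^{(\theta)}$ is a tensor product of single-qubit operations, $M_s^{(\theta)}=\bigotimes_{i=1}^n M_{s_i}^{(\theta_i)}$ with $M_{s_i}^{(\theta_i)} = R_Z(\theta_i)^\dagger W \ketbra{s_i} W R_Z(\theta_i)$.

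Next, I would compute the single-qubit average $\mathbb{E}_{\theta_i}[M_{s_i}^{(\theta_i)}]$. Using $W\ket{0}=\ket{+}$ and $W\ket{1}=\ket{-}$, and $\ketbra{\pm}=\tfrac12(I\pm X)$, conjugation by $R_Z(\theta_i)$ gives $R_Z(\theta_i)^\dagger X R_Z(\theta_i) = \cos\theta_i  X - \sin\theta_i  Y$. Integrating uniformly over $\theta_i\in[0,2\pi)$ kills the off-diagonal (in the $Z$-basis) contribution, so
\beq
\mathbb{E}_{\theta_i}\bigl[M_{s_i}^{(\theta_i)}\bigr]
= \tfrac12 I
\quad\text{for each }s_i\in\{0,1\}.
\eeq
Because $\theta_1,\dots,\theta_n$ are i.i.d.\ and $M_s^{(\theta)}$ factorizes, the full twirl yields
\beq
\mathbb{E}_{\theta}\bigl[M_s^{(\theta)}\bigr] = 2^{-n}  I^{\otimes n}.
\eeq

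Substituting this into the expectation gives, for every fixed $\mG$,
\beq
\mathbb{E}_\theta\bigl[p_{\mG,\theta}(s)\bigr]
= \bra{\psi_\mG} 2^{-n} I \ket{\psi_\mG}
= 2^{-n} \bracket{\psi_\mG}
= 2^{-n},
\eeq
using normalization of $\ket{\psi_\mG}$. Since this is independent of $\mG$, taking the subsequent expectation over $\mG\sim\mathrm{Unif}[\mF_d(\mH_n)]$ trivially preserves the value, establishing \cref{eq:bip-first-moment}. I do not anticipate a genuine obstacle here: the proof is a one-shot application of the $Z$-twirl identity $\int_0^{2\pi}\frac{d\theta}{2\pi} R_Z(\theta)^\dagger O R_Z(\theta) = \mathrm{diag}(O)$, and the only thing to be careful about is that the $\theta$-average alone already ``kills'' any dependence on $\mG$ and on $\{v_i\}$, so no combinatorial property of the graph ensemble is invoked at this stage; that work will be deferred to the (harder) second-moment estimate.
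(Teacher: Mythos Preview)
Your proposal is correct and follows essentially the same route as the paper's proof: both write $p_{\mG,\theta}(s)$ as an expectation value of the conjugated projector $U_{\mathrm{R}}^{(\theta)\dagger}\ketbra{s}U_{\mathrm{R}}^{(\theta)}$, factorize over sites, and use the single-qubit $Z$-twirl identity (via $W\ketbra{s_i}W=\tfrac12(I+(-1)^{s_i}X)$ and the vanishing of $\mathbb{E}_{\theta_i}[\cos\theta_i],\mathbb{E}_{\theta_i}[\sin\theta_i]$) to reduce the averaged POVM element to $2^{-n}I$. Your explicit remark that the $\theta$-average already eliminates any $\mG$- and $\{v_i\}$-dependence, rendering the graph expectation trivial, is exactly in line with the paper's computation.
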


 \begin{proof}
 The informal reason is that this follows by symmetry. For each fixed $\mG$ and $\t$, $\sum_s p_{\mG,\t}(s)=1$.
The ensemble over $(\mG,\t)$ is symmetric over output strings $s$; hence $\mathbb{E}_{\mG,\theta}[p_{\mG,\theta}(s)]$ is the same for all $s$, and summing over $s$ gives 
$2^n \mathbb{E}_{\mG,\theta}[p_{\mG,\theta}(s)]=1$.

A more complete argument that makes this explicit is the following.
 Writing $U \equiv e^{-iH'_Z} W^{\otimes n}$, we have $\UIQP^{(\theta)}=U_{\mathrm{R}}^{(\theta)}U$. Then
 \beq
 \label{eq:X}
 p_{\mG,\theta}(s)  =  \bra{0^{\otimes n}} U^\dagger \Bigl(U_{\mathrm{R}}^{(\theta)\dagger}\ketbra{s}U_{\mathrm{R}}^{(\theta)}\Bigr) U \ket{0^{\otimes n}}.
 \eeq
 Since the $\theta_i$ are i.i.d. uniform on $[0,2\pi)$ and $U_{\mathrm{R}}^{(\theta)}$ factorizes,
 \begin{align}
 &\mathbb{E}_{\theta} \left[ U_{\mathrm{R}}^{(\theta)\dagger}\ketbra{s}U_{\mathrm{R}}^{(\theta)} \right]\notag \\
 &\quad = \bigotimes_{i=1}^n \mathbb{E}_{\theta_i} \left[ R_Z(-\theta_i) \Bigl(W \ketbra{s_i} W\Bigr) R_Z(\theta_i)\right].
 \end{align}

 Using 
 \beq
 \label{eq:WsiW}
 W\ketbra{s_i}W = \frac{1}{2}\bigl(I + (-1)^{s_i} X\bigr)
 \eeq 
 we obtain
 \beq
 \label{eq:RXR}
 R_Z(-\theta_i) X R_Z(\theta_i)=\cos\theta_i X-\sin\theta_i Y .
 \eeq 
 Since $\mathbb{E}_{\theta_i}[\cos\theta_i] = \mathbb{E}_{\theta_i}[\sin\theta_i]=0$, we find
 \beq
 \mathbb{E}_{\theta_i} \left[ R_Z(-\theta_i) \Bigl(W \ketbra{s_i} W\Bigr) R_Z(\theta_i)\right]  =  \frac12 I.
 \eeq
 Therefore, \cref{eq:X} yields
 \beq
 \mathbb{E}_{\mG,\theta}[p_{\mG,\theta}(s)]
 = \bra{0^{\otimes n}} U^\dagger \Bigl(\frac12 I\Bigr)^{ \otimes n} U \ket{0^{\otimes n}}
 = 2^{-n},
 \eeq
 as claimed.
 \end{proof}

\begin{mylemma}
 \label{lem:second-moment}
Restrict to the case $\mH_n = K_n$, so that $\mG$ is drawn from 
$\mathrm{Unif}[\mF_d(K_n)]$, which coincides with the usual uniform 
distribution $\mG_r(d,n)$ on labeled $d$-regular graphs on $[n]$.
 Then the second moment satisfies
\begin{equation}
 \label{eq:bip-second-moment}
\mathbb{E}_{\mG,\theta}\bigl[p_{\mG,\theta}(s)^2\bigr]
\le\ C(d) 2^{-2n}.
\end{equation}
for some $n$-independent constant $C(d)$.
\end{mylemma}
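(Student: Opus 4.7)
\textit{Proof plan.} My plan is to reduce the claim directly to the second-moment (collision) bound established in Ref.~\cite{Ghosh2025PRXQ} for graph-state sampling with random $XY$-plane product measurements on uniformly random $d$-regular graphs, and to verify that the present setup matches theirs.

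First, I would observe the ensemble match. Because $K_n$ is the complete graph on $[n]$, every simple $d$-regular graph on $[n]$ is automatically a spanning subgraph, so $\mF_d(K_n)$ is the set of all labeled simple $d$-regular graphs on $[n]$, and $\mathrm{Unif}[\mF_d(K_n)]$ coincides with the uniform distribution $\mG_r(d,n)$ analyzed in Ref.~\cite{Ghosh2025PRXQ}. Second, I would identify $p_{\mG,\theta}(s)$ with the output distribution of measuring a random $d$-regular graph state in a random $XY$-plane product basis. Using the $\mathrm{CZ}$ factorization recorded in the paragraph preceding this lemma, $\exp(-i\frac{\pi}{4}\sum_{(i,j)\in\mE_{\mG}} Z_iZ_j)$ applied to $W^{\otimes n}\ket{0^{\otimes n}}=\ket{+}^{\otimes n}$ produces, up to a global phase, the graph state $\ket{\psi_{\mG}}=\prod_{(i,j)\in\mE_{\mG}}\mathrm{CZ}_{ij}\ket{+}^{\otimes n}$, preceded by single-qubit $Z$ rotations of angle $-\frac{\pi}{4}\deg_{\mG}(i)$. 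Together with the fixed $v_i$ and the $\theta_i/2$ rotations hidden in $U_{\mathrm{R}}^{(\theta)}$, these $Z$-rotations simply shift each per-qubit measurement angle by a quantity independent of $\theta_i$. By translation invariance of the uniform distribution on $[0,2\pi)$, the shifted angles remain i.i.d.\ uniform, so $p_{\mG,\theta}(s)$ has the same joint distribution over $(\mG,\theta)$ as in Ref.~\cite{Ghosh2025PRXQ}.

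Third, I would invoke their second-moment estimate, which states that $\mathbb{E}_{\mG,\theta}[p_{\mG,\theta}(s)^2] \le C(d)\,2^{-2n}$ uniformly in $s\in\{0,1\}^n$ for every fixed $d\ge 3$, with a constant $C(d)$ independent of $n$. This is exactly the desired conclusion. As a sanity check, combining with the first-moment identity $\mathbb{E}_{\mG,\theta}[p_{\mG,\theta}(s)]=2^{-n}$ from \cref{lem:first-moment} and the Paley--Zygmund inequality would then yield \cref{conj:host-anticonc} for $\mH_n=K_n$, which is what \cref{prop:Kn-anticonc} ultimately requires.

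The hard part is really done inside Ref.~\cite{Ghosh2025PRXQ}: their proof proceeds by averaging over $\theta$ first (collapsing the single-qubit second-moment tensors onto a Pauli basis, so that only $X$ and $Y$ contributions survive), reducing the collision probability to a sum of graph-state expectation values indexed by subsets of vertices, and then bounding the expectation of these quantities over $\mG_r(d,n)$ using the configuration model and the fact that random $d$-regular graphs are locally tree-like. The only nontrivial step left on our side is the verification that the arbitrary fixed $\{v_i\}$ and the $\deg_{\mG}(i)$-dependent $Z$-offsets do not obstruct the $\theta$-averaging, and this follows from the translation-invariance observation above. Thus the main obstacle is really just carefully matching notation and conventions between the two papers; once that is done, the bound is imported directly.
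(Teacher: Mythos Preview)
Your approach is correct and close in spirit to the paper's, but the two diverge on one point worth flagging. The paper does not cite Ref.~\cite{Ghosh2025PRXQ} for a per-$s$ second-moment bound; it invokes Cor.~1 there only for the bound $\mathbb{E}_{\mG,\theta}[m_2(\mG,\theta)]\le C(d)$ on the normalized collision probability $m_2(\mG,\theta)=2^n\sum_s p_{\mG,\theta}(s)^2$. To pass from this to a bound on $\mathbb{E}_{\mG,\theta}[p_{\mG,\theta}(s)^2]$ for a \emph{fixed} $s$, the paper separately proves that the latter is independent of $s$, via a direct operator computation of $\mathbb{E}_\theta[N_s(\theta)\otimes N_s(\theta)]$ (showing it equals $\tfrac14 I\otimes I+\tfrac18(X\otimes X+Y\otimes Y)$ on each site, manifestly $s$-independent); combining this with the $m_2$ bound gives $q_n(d)=4^{-n}\mathbb{E}[m_2]\le C(d)2^{-2n}$.

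Your translation-invariance observation---that the fixed $v_i$ and the degree-dependent $Z$-offsets merely shift the i.i.d.\ uniform angles and hence preserve the joint law over $(\mG,\theta)$---is a clean way to match the ensemble to that of Ref.~\cite{Ghosh2025PRXQ}, and pushed one step further (the shift $\theta_i\mapsto\theta_i+\pi$ flips the sign of the measured $XY$-plane observable and hence flips $s_i$) it also yields $s$-independence without the operator calculation. So your route is more conceptual and the paper's more explicit and self-contained; the only caveat is that if Ref.~\cite{Ghosh2025PRXQ} indeed states only the collision-probability form (as the paper's citation suggests), you still owe the $s$-independence step rather than asserting the per-$s$ bound is quoted verbatim.
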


\begin{proof}
A $d$-factor of $K_n$ is exactly a labeled $d$-regular graph on $n$ vertices, and conversely any labeled $d$-regular graph on $n$ vertices is a $d$-factor of $K_n$. Thus the uniform distribution on $\mF_d(K_n)$ coincides with the uniform distribution $\mG_r(d,n)$ on simple $d$-regular graphs on the vertex set $[n]$, used in Ref.~\cite[Cor.~1]{Ghosh2025PRXQ}. Hence we may write $\mathbb{E}_{\mG\sim \mG_r(d,n), \theta}$ and $\mathbb{E}_{\mG\sim\mathrm{Unif}(\mF_d(K_n)),\t}$ interchangeably; as above, we abbreviate the latter as $\mathbb{E}_{\mG,\t}$.

Define the normalized second moment
\begin{equation}
\label{eq:m2}
m_2(\mG,\theta) = 2^n \sum_{s\in\{0,1\}^n} p_{\mG,\theta}(s)^2.
\end{equation}
Ref.~\cite[Cor.~1]{Ghosh2025PRXQ} shows that, for $3\le d=o(n^{1/2})$ (since we fix $d$ this holds for us),
\begin{equation}
\mathbb{E}_{\mG\sim \mG_r(d,n), \theta}\bigl[m_2(\mG,\theta)\bigr]\le\ C(d)
\end{equation}
for all sufficiently large $n$, where $C(d)$ depends only on $d$ and is independent of $n$. Again, by symmetry $\mathbb{E}_{\mG, \theta}\bigl[p_{\mG,\theta}(s)^2\bigr]$
is independent of $s$ (see directly below); denote this common value by $q_n(d)$. Then 
\beq
\label{eq:E[m2]}
\mathbb{E}_{\mG,\t}[m_2(\mG,\theta)] = 2^n \sum_{s\in\{0,1\}^n} q_n(d) = 4^n q_n(d).
\eeq
Solving for $q_n(d)$ yielding \cref{eq:bip-second-moment}.

To complete the proof, let us show that 
$\mathbb{E}_{\mG,\theta}\bigl[p_{\mG,\theta}(s)^2\bigr]$ is independent of $s$. 
Indeed, let $\rho\equiv\ketbra{0^{\otimes n}}$, $N_s(\theta)\equiv U_{\mathrm{R}}^{(\theta)\dagger}\ketbra{s}U_{\mathrm{R}}^{(\theta)}$, and 
$M_s(\mG,\theta)\equiv U^\dagger N_s(\theta)U$, with 
$U\equiv e^{-iH'_Z}W^{\otimes n}$. Then, using \cref{eq:X} we have $p_{\mG,\theta}(s) = \Tr \Bigl[\rho  M_s(\mG,\theta)\Bigr]$, so that
\beq
p_{\mG,\theta}(s)^2
= \Tr \Bigl[(\rho\otimes\rho)  M_s(\mG,\theta)\otimes M_s(\mG,\theta)\Bigr] .
\eeq
Letting $\mU\equiv U\otimes U$, we can write 
\beq
M_s(\mG,\theta)\otimes M_s(\mG,\theta)
= \mU^\dagger \bigl(N_s(\theta)\otimes N_s(\theta)\bigr) \mU,
\eeq
and therefore
\beq
\mathbb{E}_{\theta}\bigl[M_s(\mG,\theta)\otimes M_s(\mG,\theta)\bigr]
= \mU^\dagger \mathbb{E}_{\theta}\bigl[N_s(\theta)\otimes N_s(\theta)\bigr] \mU .
\eeq
Thus it suffices to show that $\mathbb{E}_{\theta}\bigl[N_s(\theta)\otimes N_s(\theta)\bigr]$ is independent of $s$.

Because $U_{\mathrm{R}}^{(\theta)}$ factorizes and the angles $\{\theta_i\}$ are i.i.d.\ uniform on $[0,2\pi)$,
\beq
N_s(\theta)\otimes N_s(\theta)
= \bigotimes_{i=1}^n \Bigl(N_{s_i,i}(\theta_i)\otimes N_{s_i,i}(\theta_i)\Bigr),
\eeq
where $s_i\in\{0,1\}$ is the $i$'th bit of $s$, and $N_{s_i,i}(\theta_i)
= R_Z(-\theta_i) W\ketbra{s_i}W R_Z(\theta_i)$. The expectation over $\theta$ factorizes:
\beq
\mathbb{E}_{\theta}\bigl[N_s(\theta)\otimes N_s(\theta)\bigr]
= \bigotimes_{i=1}^n \mathbb{E}_{\theta_i}\bigl[N_{s_i,i}(\theta_i)\otimes N_{s_i,i}(\theta_i)\bigr] ,
\eeq
so it suffices to prove that for each site $i$,
\beq
\mathbb{E}_{\theta_i}\bigl[N_{0,i}(\theta_i)\otimes N_{0,i}(\theta_i)\bigr]
= \mathbb{E}_{\theta_i}\bigl[N_{1,i}(\theta_i)\otimes N_{1,i}(\theta_i)\bigr].
\eeq
Using \cref{eq:WsiW,eq:RXR},
\beq
N_{s_i,i}(\theta_i)
= \frac12\Bigl[I + (-1)^{s_i}\bigl(\cos\theta_i X - \sin\theta_i Y\bigr)\Bigr],
\eeq
Hence, using $\mathbb{E}_{\theta_i}[\cos\theta_i] = \mathbb{E}_{\theta_i}[\sin\theta_i] = \mathbb{E}_{\theta_i}[\cos\theta_i\sin\theta_i] =0$ and $\mathbb{E}_{\theta_i}[\cos^2\theta_i]=\mathbb{E}_{\theta_i}[\sin^2\theta_i]=\frac12$, we obtain, after expanding the product and taking the expectation value
\beq
\mathbb{E}_{\theta_i}\bigl[N_{s_i,i}(\theta_i)\otimes N_{s_i,i}(\theta_i)\bigr]
= \frac14 I\otimes I + \frac18\bigl(X\otimes X + Y\otimes Y\bigr),
\eeq
which is manifestly independent of $s_i$. Thus, 
$\mathbb{E}_{\theta}\bigl[N_s(\theta)\otimes N_s(\theta)\bigr]$ is the same for all $s\in\{0,1\}^n$, and since conjugating by $U\otimes U$ does not introduce any $s$-dependence,
$\mathbb{E}_{\theta}\bigl[M_s(\mG,\theta)\otimes M_s(\mG,\theta)\bigr]$
is also independent of $s$. Finally,
\beq
\mathbb{E}_{\mG,\theta}\bigl[p_{\mG,\theta}(s)^2\bigr]
= \Tr\Bigl[(\rho\otimes\rho) \mathbb{E}_{\mG,\theta}\bigl[M_s(\mG,\theta)\otimes M_s(\mG,\theta)\bigr]\Bigr]
\eeq
is independent of $s$, as claimed.
\end{proof}

The Paley-Zygmund inequality~\cite{Vershynin2018HighDimProb} states that 
for any nonnegative random variable $x$ with $0<\mathbb{E}[x^2]<\infty$ and any $\tau\in(0,1)$,
\beq
\label{eq:PZ-general}
\Pr\bigl[x \ge \tau \mathbb{E}[x]\bigr]
\ge (1-\tau)^2 \frac{\bigl(\mathbb{E}[x]\bigr)^2}{\mathbb{E}[x^2]}.
\eeq
Applying this inequality to $x\equiv p_{\mG,\theta}(s)$ 
with $\tau=\frac{1}{2}$ 
yields
\begin{equation}
\Pr_{\mG,\theta}\Bigl[x\ge \frac{1}{2} 2^{-n}\Bigr]
\ \ge\ (1-\frac{1}{2})^2 
\frac{\bigl(\mathbb{E}_{\mG,\t}[x]\bigr)^2}{\mathbb{E}_{\mG,\t}[x^2]}
\ \ge\ \frac{1}{4 C(d)}.
\end{equation}

Thus we may take $a=\frac{1}{2}$ and $b=1/(4C(d))$ in \cref{eq:anticonc}, which depend only on $d$, and the claimed anticoncentration bound follows.  This is exactly the statement of \cref{conj:host-anticonc} specialized to $\mH_n=K_n$ and $\mathrm{Unif}[\mF_d(\mH)]=\mG_r(d,n)$. This concludes the proof of \cref{prop:Kn-anticonc}.

\textit{Proof of \cref{thm:host-sup}}.---%
Fix $0<\varepsilon<\varepsilon_\star$, where $\varepsilon_\star$ is a constant chosen below the anticoncentration-dependent threshold appearing in the standard IQP constant-error reductions (see, e.g.,~\cite{Bremner:2016aa,Hangleiter2018}). Consider one instance $(\mG,\theta)$ drawn from the QPU-restricted ensemble of \cref{def:QPU-R-E} with $\theta\sim\mathrm{Unif}[0,2\pi)^n$, and let $U^{(\theta)}_{\mathrm{DAD}}=U^{(\theta)}_{\mathrm{R}} U_{\mathrm{A}}(T) \UL$ be the corresponding device unitary with $\UL=S^\dagger W^{\otimes n}$ and $U^{(\theta)}_{\mathrm{R}}$ as in \cref{eq:RZ-UR}.

Choose smooth monotone schedules such that $
\eta\|\HI\| + K \beta \Delta\alpha  \le  \varepsilon/2$.
By \cref{lem:main} (applied with this $U^{(\theta)}_{\mathrm{R}}$), the output distribution of the device is within TV distance $\varepsilon/2$ of the IQP-type distribution with diagonal block $H'_Z(\mG,\theta)$ as in \cref{eq:HZ'}, i.e.,
\begin{equation}
\label{eq:TV-key-restate}
D (P_{U^{(\theta)}_{\mathrm{DAD}}}, P_{\UIQP^{(\theta)}})
 \le  \varepsilon/2 ,
\end{equation}
Any small programming errors in the angles can be absorbed here, as in \cref{eq:TV-key}. 

Suppose there exists a polynomial-time classical algorithm that, on a fraction at least $\delta_2$ of instances $(\mG,\theta)$, outputs samples from a distribution $P_{\mathrm{cl}}$ satisfying
$D (P_{\mathrm{cl}}, P_{U^{(\theta)}_{\mathrm{DAD}}})  <  \varepsilon/2$.
Combining this with \cref{eq:TV-key-restate} via the triangle inequality gives, for those instances,
\begin{equation}
\label{eq:cl-to-IQP}
D (P_{\mathrm{cl}}, P_{\UIQP^{(\theta)}})  <  \varepsilon .
\end{equation}

By \cref{conj:host-anticonc} or by \cref{prop:Kn-anticonc}, for each fixed output string $s\in\{0,1\}^n$ there exist constants $a,b>0$ (depending only on $d$) such that
$\Pr_{\mG,\theta} \bigl[P_{\UIQP^{(\theta)}}(s)\ge a 2^{-n}\bigr]\ \ge\ b$.

By \cref{conj:host-avgcase}, there exists a constant $\delta_1>0$ such that, unless $\mathrm{PH}$ collapses, no algorithm $A\in\mathrm{FBPP}^{\mathrm{NP}}$ can approximate $\mZ(\mG,\theta)$ within multiplicative error $1\pm 1/\mathrm{poly}(n)$ on more than a $\delta_1$-fraction of instances $(\mG,\theta)$ drawn from the QPU-restricted ensemble.

The mapping between $P_{\UIQP^{(\theta)}}(s)$ and $\mZ(\mG,\theta)$ is the standard one for IQP circuits with commuting diagonal gates, so an $\varepsilon$-accurate sampler satisfying \cref{eq:cl-to-IQP} on a fraction $\gamma$ of instances would, via Stockmeyer counting and the anticoncentration bound above, yield such an $\mathrm{FBPP}^{\mathrm{NP}}$ algorithm on at least a constant fraction $c\,\gamma$ of those instances, where $c>0$ depends only on $a$ and $b$ (and hence only on $d$). 

Choosing $\varepsilon_\star>0$ small enough relative to the anticoncentration constant $a$ [so that an additive approximation of the probabilities $p_{\mG,\theta}(s)$ can be translated into a multiplicative approximation of the partition function $\mZ(\mG,\theta)$] ensures that this Stockmeyer-based reduction applies.
Let $\delta_2 := \delta_1 / c$. Then any polynomial-time classical sampler achieving TV error $<\varepsilon$ (for some $\varepsilon$ with $0< \varepsilon < \varepsilon_\star$) on a fraction $\gamma>\delta_2$ of instances $(\mG,\theta)$ would give an $\mathrm{FBPP}^{\mathrm{NP}}$ algorithm that approximates $\mZ(\mG,\theta)$ on more than a $\delta_1$-fraction of instances, contradicting \cref{conj:host-avgcase} unless $\mathrm{PH}$ collapses to the third level $\Sigma_3$. This concludes the proof.

We note that the plausibility of \cref{conj:host-avgcase} is supported by \emph{worst}-case $\#$P-hardness results for complex Ising
models on bounded-degree graphs~\cite{GoldbergGuo2017ComplexIsing,Galanis2022ComplexIsingBoundedDegree}. To prove \cref{conj:host-avgcase} would require an \emph{average}-case hardness theorem for the complex-valued Ising model on simple $D$-regular graphs, showing that a multiplicative approximation to $\mZ(\mG,\theta)$ is \#P-hard on a  constant fraction of $(\mG,\theta)$.

\end{document}